\theoremstyle{plain}
\newtheorem{Th}{Theorem}[section]
\newtheorem{Cor}[Th]{Corollary}
\newtheorem{Lem}[Th]{Lemma}
\newtheorem{Prop}[Th]{Proposition}
\theoremstyle{definition}
\newtheorem{Def}{Definition}[section]
\theoremstyle{remark}
\newtheorem*{Rem}{Remark}
\numberwithin{equation}{section}
\newcommand{\PP}{{\mathbb P}}
\newcommand{\DD}{{\mathbb D}}
\newcommand{\ZZ}{{\mathbb Z}}
\newcommand{\VV}{{\mathbb V}}
\newcommand{\RR}{{\mathbb R}}
\newcommand{\XX}{{\mathbb X}}
\newcommand{\cL}{{\mathcal L}}
\newcommand{\bphi}{\boldsymbol{\phi}}
\newcommand{\bPhi}{\boldsymbol{\Phi}}
\begin{document}

\title{Hirota equation and the quantum plane}

\author{Adam Doliwa}
\address{Faculty of Mathematics and Computer Science, University of Warmia and Mazury, ul.~S{\l}oneczna~54, 10-710 Olsztyn, Poland}
\email{doliwa@matman.uwm.edu.pl}
\urladdr{http://wmii.uwm.edu.pl/~doliwa/}

\date{\today}
\keywords{integrable discrete geometry; Hirota equation; Desargues maps; Darboux transformations; affine Weyl group action; multidimensional quadrilateral lattice; pentagonal relation; Weyl commutation relations; quantum plane}
\subjclass[2010]{Primary 37K10; Secondary 39A14, 37K60, 51A20, 16T20}

\begin{abstract}
We discuss geometric integrability of Hirota's discrete KP equation in the framework of projective geometry over division rings using the recently introduced notion of Desargues maps. We also present the Darboux-type transformations, and we review symmetries of the Desargues maps from the point of view of root lattices of type $A$ and the action of the corresponding affine Weyl group. Such a point of view facilities to study the relation of Desargues maps and the discrete conjugate nets. Recent investigation of geometric integrability of Desargues maps allowed to introduce two maps satisfying functional pentagon equation. Moreover, the ultra-locality requirement imposed on the maps leads to Weyl commutation relations. We show that the pentagonal property of the maps allows to define a coproduct in the quantum plane bi-algebra, which can be extended to the corresponding Hopf algebra.
\end{abstract}
\maketitle
\begin{quote}
\emph{The relevance of a geometric theorem is determined by what the theo\-rem tells us about space, and not by the eventual difficulty of the proof. The Desargues' theorem of projective geometry comes as close as a proof can to the Zen ideal. It can be summarized in two words: "I~see!" Nevertheless, Desargues' theorem, far from trivial despite the simplicity of its proof, has many more applications both in geometry and beyond ...}

\bigskip \hfill \emph{Gian Carlo Rota, The Phenomenology of Mathematical Proof}
\end{quote}

\section{Introduction}
Everybody interested in the (pre)history of soliton theory should consult monographs of Bianchi~\cite{Bianchi}, Darboux~\cite{Darboux-OS,DarbouxIV}, Eisenhart~\cite{Eisenhart-TS} or Tzitz\'{e}ica~\cite{Tzitzeica}. In these geometry books, which summarize classical XIX-th century style developments in theory of submanifolds and their transformations, one can recognize many fundamental facts from the theory of integrable partial differential equations. In looking for analogous geometric interpretation of integrable partial difference systems we have found that very often their integrability features are encoded in incidence geometry theorems of Pappus, Desargues, Pascal, Miquel and others \cite{CDS,Dol-Koe,BQL,CQL,Dol-Des}, compare also works \cite{BobSur-Proc,BobSur,BobSur-Koenigs,UHJ,KingSchief1,KingSchief,KoSchief-Men,KoSchiefSBKP,KoSchiefSDS-II} written in a similar spirit; for introduction to projective geometry and its subgeometries see \cite{Coxeter-IG,Richter-Gebert}. 

Hirota's discrete Kadomtsev--Petviashvili (KP) equation \cite{Hirota} may be considered as the Holy Grail of integrable systems theory, both on the classical and the quantum level \cite{KNS-rev}. In the present paper, based on our earlier publications \cite{Dol-Des,Dol-AN,DoliwaSergeev-pentagon}, we review geometric aspects of the non-commutative Hirota system within the framework of projective geometry over division rings. The crucial notion here is that of Desargues maps, where the underlying geometric property is collinearity of three points, which gives the linear problem for the Hirota system. This should be considered as further simplification of (already rather non-complicated) approach to integrable discrete geometry via the theory of multidimensional quadrilateral lattices~\cite{MQL} based on coplanarity of four points. We remark that the quadrilateral lattice is the integrable discrete analogue~\cite{Sauer2,DCN} of the conjugate net which is the fundamental geometric object of the geometric works of Darboux and his contemporaries mentioned above.
Surprisingly enough, the theory of quadrilateral lattices is contained in the theory of Desargues maps. Moreover, the latter allows for a description in terms of the $A$-type root lattice what makes it invariant with respect to the corresponding affine Weyl group action.    

One of motivations to study non-commutative versions of integrable discrete systems \cite{FWN-Capel,Kupershmidt,BobSur-nc,Nimmo-NCKP} is their relevance in integrable lattice field theories. In particular, as shown by \cite{BaMaSe,Sergeev-q3w}, the four dimensional consistency of the geometric construction of quadrilateral lattice \cite{MQL} is related to Zamolodchikov's tetrahedron equation~\cite{Zamolodchikov}, which is a multidimensional analogue of the quantum Yang--Baxter equation~\cite{Baxter,QISM,YB-Miwa}. A closer look at the structure of Desargues maps allowed to isolate~\cite{DoliwaSergeev-pentagon} two maps which satisfy the functional pentagon equation, whose quantum version forms the fundamental ingredient in the present-day theory of quantum groups~\cite{BaajSkandalis,Woronowicz-P,KustermansVaes}. As it was shown in~\cite{DoliwaSergeev-pentagon}, in the transition "from non-commutative to quantum" it is enough to assume the ultra-locality~\cite{PCT} of the maps, i.e. in the context of Desargues maps \emph{ultra-locality requires the Weyl commutation relations~\cite{Weyl-GQ}}.  

To previously published results we added here the geometric meaning of the Darboux-type transformations of the non-commutative Hirota equation. Moreover we study in more detail the connection of the pentagon maps related to the Hirota equation to the bialgebra structure of Manin's  quantum plane \cite{Manin} and the Hopf algebra structure of quantum '$ax+b$' group~\cite{ax+b}. 

\section{Integrability and symmetry of the Hirota equation}
\subsection{The Hirota equation and Desargues maps}
Consider \cite{Dol-Des} Desargues maps 
$\phi\colon\ZZ^N\to\PP^M(\DD)$ of multidimensional integer lattice into 
projective space of dimension $M\geq 2$ over division ring $\DD$, characterized by the condition that for an arbitrary $n\in\ZZ^N$ and any pair of indices 
$i\ne j$ the points $\phi(n)$, $\phi(n+\boldsymbol{\varepsilon}_i)$ and 
$\phi(n+\boldsymbol{\varepsilon}_j)$ are collinear; here 
$\boldsymbol{\varepsilon}_i = (0, \dots, \stackrel{i}{1}, \dots ,0)$
is the $i$-th element of the canonical basis of $\RR^N$. In what follows we will use the standard notation $F_{(\pm i)}(n)= F(n\pm\boldsymbol{\varepsilon}_i)$ for any function $F$ on $\ZZ^N$. We will also often skip the argument $n$ of the map.
\begin{figure}
\begin{center}
\includegraphics[width=7cm]{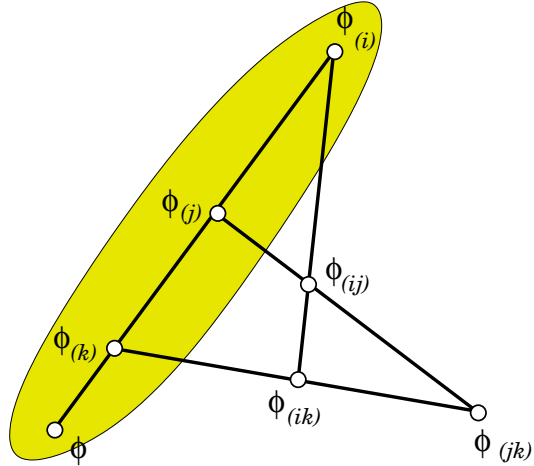}
\end{center}
\caption{Desargues map condition with the Veblen configuration. The points $\phi$, $\phi_{(i)}$, $\phi_{(j)}$ and $\phi_{(k)}$ correspond to the points 
$A$, $B$, $C$ and $D$ used to define the normalization map~$W$} 
\label{fig:Desargues-Veblen-normal}
\end{figure}
In the homogeneous coordinates $\bphi\colon\ZZ^N\to\DD^{M+1}$ the defining condition of the Desargues maps can be described in terms of the linear system
\begin{equation} \label{eq:lin-A}
\bPhi + \bPhi_{(i)} A_{ij} + \bPhi_{(j)} A_{ji} = 0, \qquad i\ne j,
\end{equation}
where $A_{ij}\colon\ZZ^N\to {\DD}^{\times}$ are certain non-vanishing functions (we consider the generic case where no two of the three collinear points coincide).
The compatibility of the linear system \eqref{eq:lin-A}, or the Desargues map equation, reads
\begin{align} \label{eq:alg-cond}
& A_{ij}^{-1} A_{ik} + A_{kj}^{-1} A_{ki} = 1, \\
\label{eq:shift-cond}
&A_{ik(j)}A_{jk} = A_{jk(i)} A_{ik},
\end{align}
where indices $i,j,k$ are distinct. 

As it was shown in \cite{Dol-Des} there exists a special gauge in which
the linear problem takes the form
\cite{DJM-II,Nimmo-NCKP}
\begin{equation} \label{eq:lin-dKP}
\bPhi_{(i)} - \bPhi_{(j)} =  \bPhi U_{ij},  \qquad i \ne j \leq N.
\end{equation}
Then equations
\eqref{eq:alg-cond}-\eqref{eq:shift-cond} reduce
to the following systems \cite{Nimmo-NCKP} for distinct triples $i,j,k$
\begin{align} \label{eq:alg-comp-U}
& U_{ij} + U_{jk} + U_{ki} = 0, \\
& \label{eq:U-rho} 
U_{kj}U_{ki(j)} = U_{ki} U_{kj(i)}.
\end{align}
Equation \eqref{eq:U-rho} allows to introduce 
the potentials $\rho_i\colon\ZZ^K\to\DD^\times$ such that
\begin{equation} \label{eq:def-rho}
U_{ij} = \rho_i^{-1}
 \rho_{i(j)}.
\end{equation}
When $\DD$ is commutative, i.e. a field, the functions $\rho_i$ can 
be parametrized in terms of a 
single potential $\tau$ (the tau-function)
\begin{equation} \label{eq:r-tau}
\rho_i = (-1)^{\sum_{k>i}n_k}
\frac{\tau_{(i)}}{\tau},
\end{equation} 
and then the functions $U_{ij}$ can 
be parametrized as follows
\begin{equation} \label{eq:U-tau}
U_{ij} = \frac{\tau \tau_{(ij)}}{\tau_{(i)} \tau_{(j)}}, \qquad i< j, 
\end{equation}
which solves equation \eqref{eq:U-rho}. Then equations
\eqref{eq:alg-comp-U} reduce to the celebrated Hirota system \cite{Hirota}
\begin{equation} \label{eq:H-M}
\tau_{(i)}\tau_{(jk)} - \tau_{(j)}\tau_{(ik)} + \tau_{(k)}\tau_{(ij)} =0,
\qquad 1\leq i< j <k \leq N.
\end{equation}
\begin{Rem}
The linear system \eqref{eq:lin-A} written~\cite{Dol-Des} in terms of nonhomogeneous coordinates (the affine gauge) leads to the non-commutative discrete modified KP system  \cite{FWN-Capel}.
\end{Rem}

A crucial property of the Hirota equation (originally written for $N=3$) is that the number of independent variables can be arbitrary large. Such a multidimensional consistency is nowadays placed at the central point \cite{ABS,FWN-cons} of integrability theory and is considered as "the precise analogue of the \emph{hierarchy} of nonlinear evolution equations in the case of continuous systems"~\cite{NRGO}. We remark that already in the first works on geometry of quadrilateral lattices, their reductions and transformations~\cite{MQL,CDS,TQL,q-red,DS-sym} the multidimensional aspects of the procedure of construction (called there the geometric integrability scheme) of the lattice from initial data were considered, and the identification of the Darboux-type transformations as recursive augmentation of the number of independent variables has been pointed out.

The important relation of four dimensional consistency of the Hirota equation (in its Schwarzian form) and the Desargues configuration has been observed by Wolfgang Schief (the author has learned about that relation from the talk of A.~Bobenko~\cite{Bobenko-talk}).

\subsection{The Darboux transformations of the Hirota equation}
Below we present the theory of Darboux-type transformations of Desargues maps and of the corresponding solutions of the non-commutative Hirota equation (see also~\cite{GNO}). Its geometric content follows closely the theory of transformations of quadrilateral lattices~\cite{MDS,TQL}, for analogous algebraic results in the commutative context see \cite{Nimmo-KP,Doliwa-Nieszporski}.
\subsubsection{Elementary Darboux transformation of the Desargues maps}
\begin{Prop}
Given a solution $\theta\colon\ZZ^N\to\DD^\times$ of the linear system \eqref{eq:lin-dKP} of the non-commutative Hirota equation then
\begin{equation} \label{eq:DT}
\bPhi^{\mathcal{D}} = \bPhi_{(i)} - \bPhi \, \theta^{-1} \theta_{(i)}
\end{equation}
satisfies the linear system of the same form
\begin{equation} \label{eq:lin-dKP-D}
\bPhi_{(i)}^{\mathcal{D}}  - \bPhi_{(j)}^{\mathcal{D}}  =  
\bPhi^{\mathcal{D}}  U_{ij}^{\mathcal{D}} ,  \qquad i \ne j \leq N,
\end{equation}
with the transformed potential 
\begin{equation} \label{eq:U-D}
U_{ij}^{\mathcal{D}} = \theta^{-1}_{(i)} \, \theta U_{ij}\,  \theta^{-1}_{(j)} \theta_{(ij)} , \qquad i\neq j.
\end{equation}
\end{Prop}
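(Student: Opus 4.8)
The plan is first to check that the transform $\bPhi^{\mathcal{D}}$ defined in \eqref{eq:DT} is well defined, i.e.\ that its right-hand side does not depend on the direction $i$ used to write it. I would form the difference of the two candidate definitions coming from directions $i$ and $j$, namely $\bigl(\bPhi_{(i)}-\bPhi\,\theta^{-1}\theta_{(i)}\bigr)-\bigl(\bPhi_{(j)}-\bPhi\,\theta^{-1}\theta_{(j)}\bigr)$, and regroup it as $\bigl(\bPhi_{(i)}-\bPhi_{(j)}\bigr)-\bPhi\,\theta^{-1}\bigl(\theta_{(i)}-\theta_{(j)}\bigr)$. Applying the linear system \eqref{eq:lin-dKP} to $\bPhi$ in the first bracket and to the scalar solution $\theta$ in the second, both brackets become expressions built from $U_{ij}$ and the two contributions cancel. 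This direction-independence is the structural fact that makes the rest of the argument go through.

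With $\bPhi^{\mathcal{D}}$ established as a single function on $\ZZ^N$, I would next compute $\bPhi_{(i)}^{\mathcal{D}}-\bPhi_{(j)}^{\mathcal{D}}$ by representing $\bPhi^{\mathcal{D}}$ through a \emph{different} direction for each of the two shifts. Writing $\bPhi^{\mathcal{D}}$ via direction $j$ and then shifting in direction $i$ gives $\bPhi_{(i)}^{\mathcal{D}}=\bPhi_{(ij)}-\bPhi_{(i)}\theta_{(i)}^{-1}\theta_{(ij)}$, while writing it via direction $i$ and shifting in direction $j$ gives $\bPhi_{(j)}^{\mathcal{D}}=\bPhi_{(ij)}-\bPhi_{(j)}\theta_{(j)}^{-1}\theta_{(ij)}$. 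The point of this choice is that the doubly-shifted term $\bPhi_{(ij)}$ is common to both and drops out of the difference, leaving $\bPhi_{(i)}^{\mathcal{D}}-\bPhi_{(j)}^{\mathcal{D}}=\bigl(\bPhi_{(j)}\theta_{(j)}^{-1}-\bPhi_{(i)}\theta_{(i)}^{-1}\bigr)\theta_{(ij)}$.

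To recognize a left factor $\bPhi^{\mathcal{D}}$ in this expression I would rewrite \eqref{eq:DT} in the factored form $\bPhi_{(k)}\theta_{(k)}^{-1}-\bPhi\,\theta^{-1}=\bPhi^{\mathcal{D}}\theta_{(k)}^{-1}$, valid for any $k$ by the well-definedness just proved. Subtracting the instances $k=i$ and $k=j$ yields $\bPhi_{(i)}\theta_{(i)}^{-1}-\bPhi_{(j)}\theta_{(j)}^{-1}=\bPhi^{\mathcal{D}}\bigl(\theta_{(i)}^{-1}-\theta_{(j)}^{-1}\bigr)$, whence $\bPhi_{(i)}^{\mathcal{D}}-\bPhi_{(j)}^{\mathcal{D}}=\bPhi^{\mathcal{D}}\bigl(\theta_{(j)}^{-1}-\theta_{(i)}^{-1}\bigr)\theta_{(ij)}$. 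It then remains to identify the scalar coefficient $\bigl(\theta_{(j)}^{-1}-\theta_{(i)}^{-1}\bigr)\theta_{(ij)}$ with the claimed $U_{ij}^{\mathcal{D}}$ of \eqref{eq:U-D}; clearing the common invertible factor $\theta_{(ij)}$ and then the factors $\theta_{(i)}$ on the left and $\theta_{(j)}$ on the right reduces this identity to $\theta_{(i)}-\theta_{(j)}=\theta\,U_{ij}$, which is exactly \eqref{eq:lin-dKP} applied to $\theta$.

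I expect the only real obstacle to be bookkeeping of the order of factors: since $\DD$ is merely a division ring, every inverse and every product must be kept on the correct side, and the whole computation hinges on using the two different representations of $\bPhi^{\mathcal{D}}$ so that the unwanted $\bPhi_{(ij)}$ term cancels while the surviving terms assemble into a clean left multiple of $\bPhi^{\mathcal{D}}$.
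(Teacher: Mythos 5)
Your proof is correct, and every step keeps the factors in the right order for a division ring. There is nothing to compare it against line by line: the paper states this Proposition without proof, treating it as a direct calculation in the spirit of the quadrilateral-lattice transformation theory it cites, with the geometric content deferred to the Remark that follows; your argument supplies the missing verification in a self-contained way. Two organizational choices are worth highlighting. First, you establish at the outset that the right-hand side of \eqref{eq:DT} is independent of the index $i$ — this is exactly point 1) of Corollary~\ref{cor:DT}, which the paper only states \emph{after} the Proposition — and doing it first is what makes the claim \eqref{eq:lin-dKP-D} for an arbitrary pair $i\neq j$ unambiguous when the transform is written through a single index. Second, the device of computing $\bPhi^{\mathcal{D}}_{(i)}$ from the direction-$j$ representation and $\bPhi^{\mathcal{D}}_{(j)}$ from the direction-$i$ representation, so that the doubly shifted term $\bPhi_{(ij)}$ cancels, together with the factored form $\bPhi_{(k)}\theta_{(k)}^{-1}-\bPhi\,\theta^{-1}=\bPhi^{\mathcal{D}}\theta_{(k)}^{-1}$, reduces the whole Proposition to the scalar hypothesis $\theta_{(i)}-\theta_{(j)}=\theta\,U_{ij}$; the resulting coefficient
\begin{equation*}
U_{ij}^{\mathcal{D}}=\bigl(\theta_{(j)}^{-1}-\theta_{(i)}^{-1}\bigr)\theta_{(ij)}
=\theta_{(i)}^{-1}\,\theta\,U_{ij}\,\theta_{(j)}^{-1}\theta_{(ij)}
\end{equation*}
is precisely the non-commutative form \eqref{eq:U-D}, the middle equality being the identity $\theta_{(i)}^{-1}(\theta_{(i)}-\theta_{(j)})\theta_{(j)}^{-1}=\theta_{(j)}^{-1}-\theta_{(i)}^{-1}$ applied to the hypothesis. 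No gaps remain.
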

\begin{Cor} \label{cor:DT}
Let $\bPhi^{\mathcal{D}} $ be constructed from  $\bPhi$ as above.\\
1) The transform $\bPhi^{\mathcal{D}} $ of $\bPhi$ does not depend on the index $i$.\\
2) Points of the Desargues map described by $\bPhi^{\mathcal{D}}$, the corresponding points of the map represented by $\bPhi_{(i)}$ and the points representing $\bPhi$ are collinear for all $1\leq i\leq N$.\\
3) The potentials $\rho_i$ transform according to the formula
\begin{equation} \label{eq:rho-D}
\rho_i^{\mathcal{D}}  = \rho_i \theta^{-1} \theta_{(i)}.
\end{equation}
4) The function $\theta^{-1}$ satisfies equations
\begin{equation}
\theta^{-1}_{(j)} - \theta^{-1}_{(i)} = U_{ij}^{\mathcal{D}} \theta^{-1}_{(ij)} ,
\qquad i\neq j
\end{equation}
\end{Cor}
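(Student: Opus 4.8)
The plan is to dispatch the four claims by short computations in the division ring $\DD$, all resting on one observation: since $\theta$ is by hypothesis a solution of \eqref{eq:lin-dKP}, it satisfies, exactly like the columns of $\bPhi$, the relation $\theta_{(i)} - \theta_{(j)} = \theta U_{ij}$. This shared relation is what makes the construction \eqref{eq:DT} index-independent, and it will be invoked repeatedly below.

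For part 1) I would form the difference of the two expressions for $\bPhi^{\mathcal{D}}$ built from indices $i$ and $j$, namely $(\bPhi_{(i)} - \bPhi_{(j)}) - \bPhi\,\theta^{-1}(\theta_{(i)} - \theta_{(j)})$; substituting $\bPhi_{(i)} - \bPhi_{(j)} = \bPhi U_{ij}$ and $\theta_{(i)} - \theta_{(j)} = \theta U_{ij}$ collapses the factor $\theta^{-1}\theta$ and shows the difference vanishes. Part 2) is then immediate, because \eqref{eq:DT} exhibits $\bPhi^{\mathcal{D}} = \bPhi_{(i)}\cdot 1 + \bPhi\cdot(-\theta^{-1}\theta_{(i)})$ as a right-linear combination of the homogeneous coordinate vectors $\bPhi_{(i)}$ and $\bPhi$, so the three corresponding points lie on one projective line for every $i$.

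For part 3) I would treat \eqref{eq:rho-D} as an ansatz and verify that it reproduces the transformed potential \eqref{eq:U-D}: using the factorization \eqref{eq:def-rho} one finds $(\rho_i^{\mathcal{D}})^{-1}\rho_{i(j)}^{\mathcal{D}} = \theta_{(i)}^{-1}\theta\,(\rho_i^{-1}\rho_{i(j)})\,\theta_{(j)}^{-1}\theta_{(ij)} = \theta_{(i)}^{-1}\theta\, U_{ij}\,\theta_{(j)}^{-1}\theta_{(ij)}$, which is precisely $U_{ij}^{\mathcal{D}}$, so \eqref{eq:rho-D} is indeed the correct potential for $\bPhi^{\mathcal{D}}$. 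Part 4) follows by inserting \eqref{eq:U-D} into the right-hand side $U_{ij}^{\mathcal{D}}\theta_{(ij)}^{-1}$: the pair $\theta_{(ij)}\theta_{(ij)}^{-1}$ cancels, leaving $\theta_{(i)}^{-1}\theta U_{ij}\theta_{(j)}^{-1}$, and replacing $\theta U_{ij}$ by $\theta_{(i)} - \theta_{(j)}$ telescopes to $\theta_{(j)}^{-1} - \theta_{(i)}^{-1}$, as claimed.

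I expect no genuine obstacle here; every part is an elementary consequence of the two linear relations together with \eqref{eq:def-rho}. The single point that demands vigilance is the noncommutativity of $\DD$: the scalars $\theta$, $\theta^{-1}$ and their shifts must be kept on the correct side of $\bPhi$ and of $U_{ij}$ at each step, since an inadvertent transposition would silently destroy the cancellations on which all four arguments depend.
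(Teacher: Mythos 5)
Your proof is correct and takes essentially the same route as the paper, which states the corollary without a written proof precisely because each part is a direct calculation from the preceding Proposition together with the relations $\bPhi_{(i)}-\bPhi_{(j)}=\bPhi U_{ij}$, $\theta_{(i)}-\theta_{(j)}=\theta U_{ij}$ and \eqref{eq:def-rho} --- exactly the computations you carry out. Your handling of the noncommutative ordering (e.g.\ $(\rho_i\theta^{-1}\theta_{(i)})^{-1}=\theta_{(i)}^{-1}\theta\rho_i^{-1}$ and the identity $\theta_{(i)}^{-1}(\theta_{(i)}-\theta_{(j)})\theta_{(j)}^{-1}=\theta_{(j)}^{-1}-\theta_{(i)}^{-1}$) is the one point where care is needed, and you got it right.
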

\begin{Def}
The Desargues map $\phi^{\mathcal{D}}\colon \ZZ^N \to \PP^M(\DD)$ is called \emph{elementary Darboux transform} of the Desargues map $\phi \colon \ZZ^N \to \PP^M(\DD)$ if points of $\phi^{\mathcal{D}}$ are incident with the lines of nearest neighbours of the corresponding points of $\phi$.
\end{Def}
\begin{Rem}
Geometric meaning of the formula \eqref{eq:DT} is as follows~\cite{TQL}. Using $\theta$ we extend the Desargues map represented by $\bPhi$ from $\PP^M(\DD)$ to the map represented by $\left( \begin{array}{c} \theta \\ \bPhi \end{array} \right)$ in the space of one dimension more. Then its elementary Darboux transform is given by intersection of the line of positive neighbours of the extended map with its natural projection (the line of positive neighbours of $\bPhi$) on the initial hyperspace
\begin{equation}
\left( \begin{array}{c} \theta_{(i)} \\ \bPhi_{(i)} \end{array} \right) + 
\left( \begin{array}{c} \theta \\ \bPhi \end{array} \right) \lambda = \left( \begin{array}{c} 0 \\ \bPhi^{\mathcal{D}} \end{array} \right).
\end{equation}
From that it follows that in the generic case elementary Darboux transforms can be described in algebraic terms as above.
\end{Rem}

\subsubsection{Elementary adjoint Darboux transformation of Desargues maps} 
\begin{Def}
The Desargues map $\phi^{\mathcal{D}^*} \colon \ZZ^N \to \PP^M(\DD)$ is called \emph{elementary adjoint Darboux transformation} of the Desargues map $\phi\colon \ZZ^N \to \PP^M(\DD)$ if $\phi$ is Darboux transformation of points of $\phi^{\mathcal{D}^*}$.
\end{Def}
\begin{figure}
\begin{center}
\includegraphics[width=16cm]{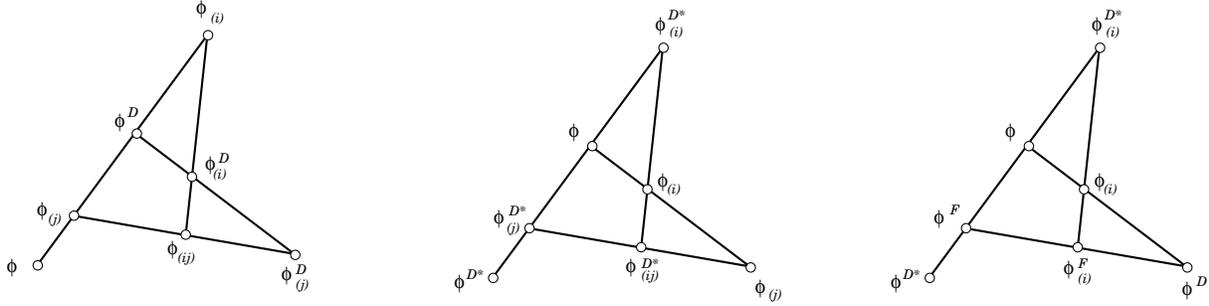}
\end{center}
\caption{The Darboux, adjoint Darboux, and binary Darboux (fundamental) transformations of the Desargues maps} 
\label{fig:Desargues-Veblen-Darboux}
\end{figure}
We will derive the algebraic formula of the transformation. Let $\tilde{\theta}$ be the solution of the linear system satisfied by 
$\tilde{\bPhi}=\bPhi^{\mathcal{D}^*} $ (in the Hirota gauge) which defines the corresponding solution $\tilde{\bPhi}^\mathcal{D}=\bPhi$ of its Darboux transform, then by point 4) of Corollary~\ref{cor:DT} above the function $\theta^*=\tilde{\theta}^{-1}$ satisfies the equation
\begin{equation} \label{eq:lin-dKP-a}
\theta^*_{(j)} - \theta^*_{(i)} = U_{ij} \theta^*_{(ij)} , \qquad i\neq j,
\end{equation}
called the adjoint of equation \eqref{eq:lin-dKP}. To proceed further we state the Lemma, which can be demonstrated by direct calculation \cite{MDS}.  
\begin{Lem}
Given (column-vector) solution $\boldsymbol{\Phi}:\mathbb{Z}^N \to\DD^M$, 
of the linear system \eqref{eq:lin-dKP}, and given (row-vector) solution 
$\boldsymbol{\Phi}^*:\mathbb{Z}^N \to (\DD^K)^T$, of the adjoint linear system 
\eqref{eq:lin-dKP-a}. These allow to
construct the matrix valued potential 
$\Omega[\boldsymbol{\Phi},\boldsymbol{\Phi}^*]:
\mathbb{Z}^N\to \mathrm{Mat}^M_K (\DD)$,
defined by the compatible system
\begin{equation} \label{eq:omega-p-p}
\Omega[\boldsymbol{\Phi},\boldsymbol{\Phi}^*]_{(i)} -  
\Omega[\boldsymbol{\Phi},\boldsymbol{\Phi}^*]= 
\boldsymbol{\Phi} \boldsymbol{\Phi}^*_{(i)}, 
\qquad i = 1,\dots , N.
\end{equation} 
\end{Lem}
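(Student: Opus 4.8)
The plan is to read \eqref{eq:omega-p-p} as a prescription for the increment of $\Omega$ in each lattice direction, and to show that this prescription is consistent, so that it actually defines a function on $\ZZ^N$ (unique up to an additive constant matrix). A first-order difference system of the form $\Omega_{(i)}-\Omega=\boldsymbol{\Phi}\boldsymbol{\Phi}^*_{(i)}$ is solvable on $\ZZ^N$ exactly when the prescribed increments are path-independent, equivalently when the elementary shifts commute, $\Omega_{(ij)}=\Omega_{(ji)}$ for all $i\neq j$. Thus the whole statement reduces to verifying this single cross-shift identity, which I would do by direct computation. First I would compute $\Omega_{(ij)}$ by shifting the $i$-th defining equation in the direction $\boldsymbol{\varepsilon}_j$, and $\Omega_{(ji)}$ by shifting the $j$-th equation in the direction $\boldsymbol{\varepsilon}_i$, using that the shift acts as a homomorphism on products of functions:
\begin{align*}
\Omega_{(ij)} &= \Omega_{(j)} + \boldsymbol{\Phi}_{(j)}\,\boldsymbol{\Phi}^*_{(ij)}, \\
\Omega_{(ji)} &= \Omega_{(i)} + \boldsymbol{\Phi}_{(i)}\,\boldsymbol{\Phi}^*_{(ij)}.
\end{align*}
Demanding their equality then amounts to the identity
\begin{equation*}
\Omega_{(j)} - \Omega_{(i)} = \bigl(\boldsymbol{\Phi}_{(i)} - \boldsymbol{\Phi}_{(j)}\bigr)\boldsymbol{\Phi}^*_{(ij)}.
\end{equation*}

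Next I would evaluate the left-hand side independently, straight from \eqref{eq:omega-p-p}: subtracting the $i$-th equation from the $j$-th gives $\Omega_{(j)}-\Omega_{(i)} = \boldsymbol{\Phi}\bigl(\boldsymbol{\Phi}^*_{(j)}-\boldsymbol{\Phi}^*_{(i)}\bigr)$. At this point the two linear systems close the argument. The adjoint system \eqref{eq:lin-dKP-a} applied to the row-vector $\boldsymbol{\Phi}^*$ gives $\boldsymbol{\Phi}^*_{(j)}-\boldsymbol{\Phi}^*_{(i)}=U_{ij}\boldsymbol{\Phi}^*_{(ij)}$, so the left-hand side equals $\boldsymbol{\Phi}\,U_{ij}\,\boldsymbol{\Phi}^*_{(ij)}$; while the right-hand side, using \eqref{eq:lin-dKP} in the form $\boldsymbol{\Phi}_{(i)}-\boldsymbol{\Phi}_{(j)}=\boldsymbol{\Phi}\,U_{ij}$, equals the very same expression $\boldsymbol{\Phi}\,U_{ij}\,\boldsymbol{\Phi}^*_{(ij)}$. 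The two sides therefore coincide, the consistency identity holds, and $\Omega$ is well defined.

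The only point requiring care — and the reason the claim is not automatic — is the order of factors in the division ring $\DD$, since no commutativity may be assumed. The computation closes precisely because $\boldsymbol{\Phi}$ multiplies $U_{ij}$ from the left in \eqref{eq:lin-dKP}, whereas $U_{ij}$ multiplies $\boldsymbol{\Phi}^*$ from the left in the adjoint system \eqref{eq:lin-dKP-a}; the common scalar factor $U_{ij}$ then ends up sandwiched in the same way, as $\boldsymbol{\Phi}\,U_{ij}\,\boldsymbol{\Phi}^*_{(ij)}$, on both sides of the required identity. This is exactly the feature that qualifies \eqref{eq:lin-dKP-a} as the correct adjoint of \eqref{eq:lin-dKP}, and I would stress that it is the whole content of the lemma. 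Granting the consistency identity, the standard discrete potential-theoretic argument on $\ZZ^N$ (integrating closed increments) yields the matrix-valued potential $\Omega[\boldsymbol{\Phi},\boldsymbol{\Phi}^*]\colon\ZZ^N\to\mathrm{Mat}^M_K(\DD)$ satisfying \eqref{eq:omega-p-p}, which completes the proof.
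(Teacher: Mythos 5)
Your proof is correct and is exactly the ``direct calculation'' that the paper invokes (deferring to \cite{MDS}) without writing out: the cross-shift compatibility $\Omega_{(ij)}=\Omega_{(ji)}$ reduces to an identity whose two sides both equal $\bPhi\, U_{ij}\, \bPhi^*_{(ij)}$, one via the linear system \eqref{eq:lin-dKP} and the other via its adjoint \eqref{eq:lin-dKP-a}, with the factor ordering in the division ring working out precisely as you note. There is no gap; the standard discrete integration of the closed increments on $\ZZ^N$ then yields the potential.
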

\begin{Prop}
Given solution $\theta^*\colon\ZZ^N\to \DD^\times$ of the adjoint equation \eqref{eq:lin-dKP-a} of the linear system \eqref{eq:lin-dKP} of the Desargues map $\phi\colon \ZZ^N \to \PP^M(\DD)$ then 
\begin{equation} \label{eq:DT-a}
\bPhi^{\mathcal{D}^*} = \Omega[\bPhi,\theta^*] \theta^{*-1},
\end{equation}
represents the elementary Darboux transform $\phi^{\mathcal{D}^*} \colon \ZZ^N \to \PP^M(\DD)$, and satisfies the linear system \eqref{eq:lin-dKP} with the potentials
\begin{equation} \label{eq:U-D*}
U_{ij}^{\mathcal{D}^*} = \theta^* \theta^{*-1}_{(i)} \, U_{ij}\,  \theta^{*}_{(ij)} \theta^{*-1}_{(j)}  , \qquad 
\rho_{i}^{\mathcal{D}^*} = \rho_i \theta^*_{(i)} \theta^{*-1}.
\end{equation}
\end{Prop}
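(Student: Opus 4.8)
The plan is to verify the three assertions—that $\bPhi^{\mathcal{D}^*}=\Omega[\bPhi,\theta^*]\theta^{*-1}$ solves the linear system \eqref{eq:lin-dKP} with the stated potentials, that it geometrically realizes the elementary adjoint Darboux transform, and that the $\rho$-potentials transform as indicated—by direct substitution, exploiting the defining relation \eqref{eq:omega-p-p} of the potential $\Omega$ together with the adjoint equation \eqref{eq:lin-dKP-a}. Throughout I would abbreviate $\Omega=\Omega[\bPhi,\theta^*]$, which in the present scalar situation (where $\theta^*$ is $\DD^\times$-valued, so that $\Omega$ is a single column vector) satisfies $\Omega_{(i)}-\Omega=\bPhi\,\theta^*_{(i)}$.

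First I would establish the linear system. Applying the defining relation of $\Omega$ gives $\bPhi^{\mathcal{D}^*}_{(i)}=\Omega_{(i)}\theta^{*-1}_{(i)}=\bigl(\Omega+\bPhi\,\theta^*_{(i)}\bigr)\theta^{*-1}_{(i)}=\Omega\,\theta^{*-1}_{(i)}+\bPhi$, and likewise for the index $j$. Subtracting, the $\bPhi$ terms cancel and one is left with $\bPhi^{\mathcal{D}^*}_{(i)}-\bPhi^{\mathcal{D}^*}_{(j)}=\Omega\bigl(\theta^{*-1}_{(i)}-\theta^{*-1}_{(j)}\bigr)$. The key manipulation is then the division-ring identity $a^{-1}-b^{-1}=a^{-1}(b-a)b^{-1}$ applied to $a=\theta^*_{(i)}$, $b=\theta^*_{(j)}$, which combined with the adjoint equation \eqref{eq:lin-dKP-a} in the form $\theta^*_{(j)}-\theta^*_{(i)}=U_{ij}\theta^*_{(ij)}$ yields $\theta^{*-1}_{(i)}-\theta^{*-1}_{(j)}=\theta^{*-1}_{(i)}U_{ij}\theta^*_{(ij)}\theta^{*-1}_{(j)}$. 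Factoring $\Omega\theta^{*-1}=\bPhi^{\mathcal{D}^*}$ out on the left then reads off the potential $U_{ij}^{\mathcal{D}^*}=\theta^*\theta^{*-1}_{(i)}U_{ij}\theta^*_{(ij)}\theta^{*-1}_{(j)}$, exactly the formula \eqref{eq:U-D*}.

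Next I would confirm the geometric meaning by checking that $\bPhi$ is recovered as the elementary Darboux transform of $\bPhi^{\mathcal{D}^*}$, as demanded by the defining property of the adjoint transformation. Taking $\tilde\theta=\theta^{*-1}$ as the relevant solution of the linear system of $\bPhi^{\mathcal{D}^*}$ and substituting into the Darboux formula \eqref{eq:DT}, one computes $\bPhi^{\mathcal{D}^*}_{(i)}-\bPhi^{\mathcal{D}^*}\tilde\theta^{-1}\tilde\theta_{(i)}=\bigl(\Omega\theta^{*-1}_{(i)}+\bPhi\bigr)-\Omega\theta^{*-1}\theta^*\theta^{*-1}_{(i)}=\bPhi$, so the Darboux transform of $\phi^{\mathcal{D}^*}$ is indeed $\phi$. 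For the $\rho$-potentials I would substitute the proposed $\rho_i^{\mathcal{D}^*}=\rho_i\theta^*_{(i)}\theta^{*-1}$ into $U_{ij}^{\mathcal{D}^*}=(\rho_i^{\mathcal{D}^*})^{-1}(\rho_i^{\mathcal{D}^*})_{(j)}$ and use $U_{ij}=\rho_i^{-1}\rho_{i(j)}$ from \eqref{eq:def-rho}; the inner factor $\rho_i^{-1}\rho_{i(j)}$ telescopes to $U_{ij}$ and reproduces the $U_{ij}^{\mathcal{D}^*}$ found above.

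I expect the main obstacle to be bookkeeping rather than anything conceptual: because $\DD$ is non-commutative one must keep every factor in its correct left/right position, in particular when invoking the identity for the difference of inverses and when telescoping the $\rho$-products. A secondary point needing care is that the construction presupposes the existence and consistency of $\Omega$ itself; this is exactly what the preceding Lemma guarantees through the compatibility of \eqref{eq:omega-p-p}, so I would invoke it at the outset to ensure $\bPhi^{\mathcal{D}^*}$ is well defined before carrying out the substitutions.
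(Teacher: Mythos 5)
Your proof is correct, but it takes the direct-computation route that the paper explicitly mentions and then sidesteps. The paper's entire proof consists of one observation plus a citation: by the defining relation \eqref{eq:omega-p-p} of $\Omega[\bPhi,\theta^*]$, the Darboux formula \eqref{eq:DT} with data $\tilde\theta=\theta^{*-1}$ reads $\bPhi = \left( \left( \bPhi^{\mathcal{D}^*} \theta^* \right)_{(i)} - \bPhi^{\mathcal{D}^*} \theta^* \right) \theta^{*-1}_{(i)}$ --- which is precisely your second step --- and then the potentials are obtained not by computation but by inverting the already-proven rules \eqref{eq:U-D} and \eqref{eq:rho-D} with the roles of $\phi$ and $\phi^{\mathcal{D}^*}$ interchanged: writing $U_{ij} = \theta^*_{(i)}\theta^{*-1}\, U^{\mathcal{D}^*}_{ij}\, \theta^*_{(j)}\theta^{*-1}_{(ij)}$ and solving for $U^{\mathcal{D}^*}_{ij}$ gives \eqref{eq:U-D*}, and similarly for $\rho_i$. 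Your first and third steps instead verify the linear system and the $\rho$-potentials from scratch, via the non-commutative identity $a^{-1}-b^{-1}=a^{-1}(b-a)b^{-1}$ combined with the adjoint equation \eqref{eq:lin-dKP-a}, and via the telescoping of $\rho_i^{-1}\rho_{i(j)}$; this is exactly the alternative the paper alludes to with ``can be derived directly.'' What each approach buys: the paper's is shorter and lets the duality (adjoint Darboux $=$ inverse of Darboux) do all the work, at the price of asking the reader to re-read the earlier Proposition with swapped roles; yours is self-contained, and it makes transparent where the division-ring structure actually enters. One small point you should make explicit: calling $\tilde\theta=\theta^{*-1}$ ``the relevant solution of the linear system of $\bPhi^{\mathcal{D}^*}$'' requires checking $\theta^{*-1}_{(i)}-\theta^{*-1}_{(j)}=\theta^{*-1}U^{\mathcal{D}^*}_{ij}$; this follows in one line from the identity you already derived (insert $\theta^{*-1}\theta^*$ on the left), but without stating it the geometric claim that $\phi$ is an elementary Darboux transform of $\phi^{\mathcal{D}^*}$, rather than merely a solution of an algebraic formula, is not fully closed.
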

\begin{proof}
It is enough to check that formula \eqref{eq:DT} in the present notation reads
\begin{equation*}
\bPhi = \left( \left( \bPhi^{\mathcal{D}^*} \theta^* \right)_{(i)} - 
\bPhi^{\mathcal{D}^*} \theta^* \right) \theta^{*-1}_{(i)}.
\end{equation*}
Transformation rules of the potentials can be derived directly, but actually they follow from interpretation of equations \eqref{eq:U-D} and \eqref{eq:rho-D} in the present notation.
\end{proof}
The duality between the elementary Darboux transformation and the adjoint elementary Darboux transformation follows from the transformation formulas below, which will be useful in the next section.
\begin{Prop}
Given a solution $\bPhi^*\colon\ZZ^N\to(\DD^K)^T$ of the adjoint linear system \eqref{eq:lin-dKP-a} and given solutions $\theta\colon\ZZ^N\to\DD^\times$ and $\theta^*\colon\ZZ^N\to\DD^\times$ of equations \eqref{eq:lin-dKP} and \eqref{eq:lin-dKP-a}, correspondingly. Then\\
1) the function
\begin{equation*}
\bPhi^{*{\mathcal{D}}} = \theta^{-1}\Omega[\theta,\bPhi^*]
\end{equation*}
satisfies the adjoint linear system \eqref{eq:lin-dKP-a} with the potentials $U^\mathcal{D}_{ij}$ given by \eqref{eq:U-D};\\
2)  the function
\begin{equation*}
\bPhi^{*{\mathcal{D}^*}} = \bPhi^*_{(-i)} - \theta^{*}_{(-i)} \theta^{*\, -1} \bPhi^*
\end{equation*}
does not depend on the index $i$ and satisfies the adjoint linear system \eqref{eq:lin-dKP-a} with the potentials $U^\mathcal{D^*}_{ij}$ given by \eqref{eq:U-D*}.
\end{Prop}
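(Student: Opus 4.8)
The plan is to verify both claims by direct substitution into the adjoint linear system \eqref{eq:lin-dKP-a}, in the same spirit as the computations underlying Corollary~\ref{cor:DT} and the two preceding Propositions; the two parts are the ``adjoint-solution'' counterparts of the Darboux formula \eqref{eq:DT} and of the adjoint Darboux formula \eqref{eq:DT-a}, so I expect them to follow from the defining relations together with the potential transformation laws \eqref{eq:U-D} and \eqref{eq:U-D*}.

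For part~1 I would write $\bPhi^{*\mathcal{D}} = \theta^{-1}\Psi$ with $\Psi=\Omega[\theta,\bPhi^*]$ and use the defining relation \eqref{eq:omega-p-p}, namely $\Psi_{(k)}-\Psi=\theta\bPhi^*_{(k)}$, to express $\Psi_{(i)}$, $\Psi_{(j)}$ and $\Psi_{(ij)}=\Psi+\theta\bPhi^*_{(i)}+\theta_{(i)}\bPhi^*_{(ij)}$ through $\Psi$ and the shifts of $\bPhi^*$. Substituting these into $(\bPhi^{*\mathcal{D}})_{(j)}-(\bPhi^{*\mathcal{D}})_{(i)}$ and into $U^{\mathcal{D}}_{ij}(\bPhi^{*\mathcal{D}})_{(ij)}$, and using $U^{\mathcal{D}}_{ij}\theta^{-1}_{(ij)}=\theta^{-1}_{(i)}\theta U_{ij}\theta^{-1}_{(j)}$ from \eqref{eq:U-D}, the identity to be checked splits into two pieces. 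The terms proportional to $\Psi$ match precisely because $\theta^{-1}_{(j)}-\theta^{-1}_{(i)}=\theta^{-1}_{(i)}\theta U_{ij}\theta^{-1}_{(j)}$, which is point~4) of Corollary~\ref{cor:DT} and follows at once from $\theta U_{ij}=\theta_{(i)}-\theta_{(j)}$. The remaining terms, those carrying $\bPhi^*_{(i)}$ and $\bPhi^*_{(ij)}$, are then reconciled by eliminating $\bPhi^*_{(j)}$ through the adjoint equation \eqref{eq:lin-dKP-a} for $\bPhi^*$ and again using $\theta U_{ij}=\theta_{(i)}-\theta_{(j)}$.

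For part~2 the first task is the index-independence of $\Xi:=\bPhi^{*\mathcal{D}^*}=\bPhi^*_{(-i)}-\theta^*_{(-i)}\theta^{*-1}\bPhi^*$. I would obtain it by applying the adjoint equation \eqref{eq:lin-dKP-a}, written for $\bPhi^*$ and for $\theta^*$, at the argument shifted back by $\boldsymbol{\varepsilon}_i+\boldsymbol{\varepsilon}_j$; this gives $\bPhi^*_{(-i)}-\bPhi^*_{(-j)}=U_{ij(-i-j)}\bPhi^*$ and $\theta^*_{(-i)}-\theta^*_{(-j)}=U_{ij(-i-j)}\theta^*$, whence $\bPhi^*_{(-i)}-\bPhi^*_{(-j)}=(\theta^*_{(-i)}-\theta^*_{(-j)})\theta^{*-1}\bPhi^*$ after cancelling $\theta^*\theta^{*-1}=1$, which is exactly the required independence of $i$. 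Having this, I would evaluate $\Xi_{(i)}$ by shifting the $i$-labelled form of $\Xi$ by $\boldsymbol{\varepsilon}_i$, giving $\Xi_{(i)}=\bPhi^*-\theta^*\theta^{*-1}_{(i)}\bPhi^*_{(i)}$, and similarly $\Xi_{(j)}=\bPhi^*-\theta^*\theta^{*-1}_{(j)}\bPhi^*_{(j)}$, while the double shift $\Xi_{(ij)}$ is read off from the same forms. Forming $\Xi_{(j)}-\Xi_{(i)}=\theta^*(\theta^{*-1}_{(i)}\bPhi^*_{(i)}-\theta^{*-1}_{(j)}\bPhi^*_{(j)})$ and inserting \eqref{eq:U-D*}, the equality $\Xi_{(j)}-\Xi_{(i)}=U^{\mathcal{D}^*}_{ij}\Xi_{(ij)}$ reduces, after substituting $U_{ij}\theta^*_{(ij)}=\theta^*_{(j)}-\theta^*_{(i)}$ and $U_{ij}\bPhi^*_{(ij)}=\bPhi^*_{(j)}-\bPhi^*_{(i)}$ from the adjoint equations, to a telescoping cancellation.

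The computations are elementary; the only genuine care is the bookkeeping of left/right multiplication in the division ring $\DD$, since $\theta$, $\theta^*$ and the $U_{ij}$ do not commute. I expect the main obstacle to be the second part: one must choose the label of $\Xi$ consistently when computing the three shifts $\Xi_{(i)}$, $\Xi_{(j)}$ and $\Xi_{(ij)}$ (the index-independence just established being exactly what guarantees the same answer for every admissible choice), and then keep the non-commutative factors $\theta^*\theta^{*-1}_{(\cdot)}$ in the correct order so that the intermediate $\bPhi^*_{(j)}$-terms cancel.
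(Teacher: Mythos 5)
Your proposal is correct: all the identities you invoke check out (in particular $\theta^{-1}_{(j)}-\theta^{-1}_{(i)}=\theta^{-1}_{(i)}\theta U_{ij}\theta^{-1}_{(j)}$, the formula $\Psi_{(ij)}=\Psi+\theta\bPhi^*_{(i)}+\theta_{(i)}\bPhi^*_{(ij)}$, and the backward-shifted adjoint equations giving index-independence), and the non-commutative orderings are handled consistently throughout. This is essentially the paper's own route: the Proposition is stated there as following by direct calculation in the same spirit as the neighbouring transformation results, which is exactly the verification you carry out.
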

\subsubsection{The vectorial binary Darboux transformations and their superpositions}
\begin{Def}
A superposition of elementary Darboux transformation and adjoint elementary Darboux transformation of Desargues map (or equivalently of solution of the Hirota system) is called \emph{binary Darboux (or fundamental) transformation} of Desargues map (of the Hirota system).
\end{Def}
We remark that such transformations play fundamental role \cite{TQL} in geometric theory of transformations of multidimensional lattices of planar quadrilaterals (or discrete Darboux system), see also section~\ref{sec:D-QL}. Before doing any calculation let us demonstrate that the quadrilateral whose vertices are the corresponding points $\phi$, $\phi_{(i)}$ and its binary Darboux transformed $\phi^{\mathcal{F}}$, $\phi_{(i)}^{\mathcal{F}}$ are coplanar. Indeed, both $\phi$ and $\phi^{\mathcal{F}}$ are elementary Darboux transformations of $\phi^{\mathcal{D}^*}$, and simultaneously elementary adjoint Darboux transformations of $\phi^{\mathcal{D}}$ as visualised on Figure~\ref{fig:Desargues-Veblen-Darboux}. 

Let us present the algebraic description of the binary Darboux transformation, which can be checked by direct calculation, see also \cite{MDS,TQL,Doliwa-Nieszporski}.
\begin{Prop}
Given solution $\theta\colon\ZZ^N\to \DD^\times$ of the linear system \eqref{eq:lin-dKP} of the Desargues map $\phi : \ZZ^N \to \PP^M(\DD)$ represented by solution $\bPhi\colon \ZZ^N \to \DD^{M+1}$ of \eqref{eq:lin-dKP}, and given solution 
$\theta^*\colon\ZZ^N\to \DD^\times$ of its adjoint\eqref{eq:lin-dKP-a} then\\
1) the vector valued function
\begin{equation*} 
\bPhi^{\mathcal{F}} = \bPhi - \Omega[\bPhi,\theta^*] \Omega[\theta,\theta^*]^{-1} \theta,
\end{equation*}
is a solution of the linear system \eqref{eq:lin-dKP} with the transformed potentials
\begin{equation*}
U_{ij}^{\mathcal{F}} = U_{ij} - \left( \theta^* \Omega[\theta,\theta^*]^{-1} \theta  \right)_{(i)} + \left( \theta^* \Omega[\theta,\theta^*]^{-1} \theta  \right)_{(j)} ,
\qquad \rho_i^{\mathcal{F}} = \rho_i \left( 1 +  
\theta^*_{(i)} \Omega[\theta,\theta^*]^{-1} \theta \right);
\end{equation*}
2) it is obtained as superposition of transformations
\begin{equation*}
\bPhi \stackrel{\theta}{\longrightarrow} \bPhi^{\mathcal{D}} 
\stackrel{\theta^{*\mathcal{D}}}{\longrightarrow} (\bPhi^{\mathcal{D}})^{\mathcal{D^*}} = \bPhi^{\mathcal{F}}, \qquad
\theta^{*\mathcal{D}} = \theta^{-1} \Omega[\theta,\theta^*];
\end{equation*}
3) it is obtained as superposition of transformations
\begin{equation*}
\bPhi \stackrel{\theta^*}{\longrightarrow} \bPhi^{\mathcal{D}^*} 
\stackrel{\theta^{\mathcal{D}^*}}{\longrightarrow} (\bPhi^{\mathcal{D}^*})^{\mathcal{D}} = \bPhi^{\mathcal{F}}, \qquad
\theta^{\mathcal{D}^*} = \Omega[\theta,\theta^*] \theta^{*-1} ;
\end{equation*}
3) the corresponding fundamental transformation of the solution 
$\bPhi^*:\ZZ^N\to(\DD^K)^T$ of the adjoint linear system \eqref{eq:lin-dKP-a} reads
\begin{equation*} 
\bPhi^{*\mathcal{F}} = \bPhi^* - \theta^* \Omega[\theta,\theta^*]^{-1} \Omega[\theta,\bPhi^*] .
\end{equation*}
\end{Prop}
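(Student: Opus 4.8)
The plan is to prove the two superposition statements, parts~2) and~3), first; claim~1) is then automatic, since $\bPhi^{\mathcal{F}}$ is obtained by composing an elementary Darboux transformation \eqref{eq:DT} with an elementary adjoint Darboux transformation \eqref{eq:DT-a}, and each of these was already shown to carry a solution of the linear system \eqref{eq:lin-dKP} into another solution of the same form. The entire computation rests on the defining recursion \eqref{eq:omega-p-p}, which I will use in its telescoped form $\Omega[\bPhi,\theta^*]_{(i)} = \Omega[\bPhi,\theta^*] + \bPhi\,\theta^*_{(i)}$ and $\Omega[\theta,\theta^*]_{(i)} = \Omega[\theta,\theta^*] + \theta\,\theta^*_{(i)}$, on the elementary transformation formulas \eqref{eq:DT} and \eqref{eq:DT-a}, and on the identifications $\theta^{*\mathcal{D}} = \theta^{-1}\Omega[\theta,\theta^*]$ and $\theta^{\mathcal{D}^*} = \Omega[\theta,\theta^*]\theta^{*-1}$ already recorded in the statement.

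I would verify claim~3) by direct substitution, as it is the cleaner of the two. Inserting $\bPhi^{\mathcal{D}^*} = \Omega[\bPhi,\theta^*]\theta^{*-1}$ and $\theta^{\mathcal{D}^*} = \Omega[\theta,\theta^*]\theta^{*-1}$ into the Darboux formula \eqref{eq:DT} gives
\[
(\bPhi^{\mathcal{D}^*})^{\mathcal{D}} = (\bPhi^{\mathcal{D}^*})_{(i)} - \bPhi^{\mathcal{D}^*}(\theta^{\mathcal{D}^*})^{-1}(\theta^{\mathcal{D}^*})_{(i)} .
\]
The factor $\theta^{*-1}$ cancels in the middle product $\bPhi^{\mathcal{D}^*}(\theta^{\mathcal{D}^*})^{-1} = \Omega[\bPhi,\theta^*]\Omega[\theta,\theta^*]^{-1}$; the shift $\Omega[\theta,\theta^*]_{(i)} = \Omega[\theta,\theta^*] + \theta\theta^*_{(i)}$ then produces precisely the term that cancels the $\theta^{*-1}_{(i)}$-dependence carried by $(\bPhi^{\mathcal{D}^*})_{(i)}$, and one is left with the manifestly $i$-independent $\bPhi - \Omega[\bPhi,\theta^*]\Omega[\theta,\theta^*]^{-1}\theta = \bPhi^{\mathcal{F}}$.

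For claim~2) the additional ingredient is the transformation rule of the potential $\Omega$ under the elementary Darboux transformation. I would establish
\[
\Omega[\bPhi^{\mathcal{D}},\theta^{*\mathcal{D}}] = \bPhi\,\theta^{-1}\Omega[\theta,\theta^*] - \Omega[\bPhi,\theta^*]
\]
by checking that the right-hand side satisfies \eqref{eq:omega-p-p} with data $\bPhi^{\mathcal{D}}$ and $\theta^{*\mathcal{D}}$, i.e. that its forward difference equals $\bPhi^{\mathcal{D}}(\theta^{*\mathcal{D}})_{(i)}$; after the two telescoped recursions this reduces to the trivial identity $\bPhi\theta^{-1}\theta\,\theta^*_{(i)} = \bPhi\,\theta^*_{(i)}$. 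Feeding this into the adjoint Darboux formula \eqref{eq:DT-a}, with $(\theta^{*\mathcal{D}})^{-1} = \Omega[\theta,\theta^*]^{-1}\theta$, the first term telescopes to $\bPhi$ and one again recovers $\bPhi^{\mathcal{F}}$. The potential formulas in~1) then follow by composing the elementary rules along either chain: $\rho_i^{\mathcal{F}}$ comes from \eqref{eq:rho-D} followed by the adjoint rule in \eqref{eq:U-D*}, and $U_{ij}^{\mathcal{F}}$ from \eqref{eq:U-D} followed by \eqref{eq:U-D*}, the additive structure of $U_{ij}^{\mathcal{F}}$ being the visible trace of the telescoping of \eqref{eq:omega-p-p}; alternatively one substitutes $\bPhi^{\mathcal{F}}$ into \eqref{eq:lin-dKP} and reads off the coefficient. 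The final statement on $\bPhi^{*\mathcal{F}}$ is the transpose of the same computation, with the roles of \eqref{eq:lin-dKP} and \eqref{eq:lin-dKP-a}, and of $\theta$ and $\theta^*$, interchanged.

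The only place that requires genuine care is that \eqref{eq:omega-p-p} determines each $\Omega$ only up to an additive constant, so the $\Omega$-transformation rule used in~2) and the agreement of the two superpositions~2) and~3) both presuppose that the integration constants of the various potentials are chosen compatibly. In the non-commutative setting one must in addition keep strict track of the left/right order of the factors $\theta$, $\theta^*$ and $\Omega$, since it is exactly the placement in $\bPhi^{\mathcal{F}} = \bPhi - \Omega[\bPhi,\theta^*]\Omega[\theta,\theta^*]^{-1}\theta$ that makes the recursions telescope; once the constants are aligned the whole argument collapses to the elementary shift identities above.
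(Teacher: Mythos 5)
Your proposal is correct. Note that the paper itself contains no written proof of this Proposition: it is introduced only by the phrase ``which can be checked by direct calculation'' together with references to earlier work, so the real comparison is between your argument and that implicit brute-force verification. Your organization is genuinely different and arguably more illuminating: you establish the superposition claims 2) and 3) first and then obtain claim 1) from the already-proven elementary transformation results, instead of substituting $\bPhi^{\mathcal{F}}$ into \eqref{eq:lin-dKP} and checking every formula independently. The two computations your plan hinges on are both right: in 3), the recursions $\Omega[\theta,\theta^*]_{(i)}=\Omega[\theta,\theta^*]+\theta\theta^*_{(i)}$ and $\Omega[\bPhi,\theta^*]_{(i)}=\Omega[\bPhi,\theta^*]+\bPhi\theta^*_{(i)}$ make everything collapse to $\bPhi-\Omega[\bPhi,\theta^*]\Omega[\theta,\theta^*]^{-1}\theta$ independently of $i$; in 2), the rule $\Omega[\bPhi^{\mathcal{D}},\theta^{*\mathcal{D}}]=\bPhi\theta^{-1}\Omega[\theta,\theta^*]-\Omega[\bPhi,\theta^*]$ does satisfy the defining equation \eqref{eq:omega-p-p} for the transformed data, and your remark about the additive normalization freedom of the $\Omega$'s is exactly the care needed to promote ``same difference equation'' to an equality. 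Two minor points, neither a gap. First, composing the multiplicative rules \eqref{eq:U-D} and \eqref{eq:U-D*} yields $U^{\mathcal{F}}_{ij}$ in the product form $\theta^{-1}\Omega\,\Omega^{-1}_{(i)}\theta\,U_{ij}\,\theta^{-1}_{(j)}\Omega_{(ij)}\Omega^{-1}_{(j)}\theta_{(j)}$ (writing $\Omega=\Omega[\theta,\theta^*]$), and matching this with the stated additive expression is itself a computation; your fallback of reading the coefficient off \eqref{eq:lin-dKP} is cleaner, since the identity $\Omega^{-1}_{(i)}\theta_{(i)}=\Omega^{-1}\bigl(\theta_{(i)}-\theta s_{(i)}\bigr)$ with $s=\theta^*\Omega^{-1}\theta$ makes the additive structure drop out at once (and the same identity gives $\rho^{\mathcal{F}}_i$ immediately). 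Second, the adjoint formula is not literally a ``transpose'' of the computation for $\bPhi^{\mathcal{F}}$, because in the adjoint system \eqref{eq:lin-dKP-a} and in the recursion \eqref{eq:omega-p-p} the shifts sit asymmetrically; the clean route is to run your chain 2) on $\bPhi^*$ using the paper's earlier formulas $\bPhi^{*\mathcal{D}}=\theta^{-1}\Omega[\theta,\bPhi^*]$ and $\bPhi^{*\mathcal{D}^*}=\bPhi^*_{(-i)}-\theta^*_{(-i)}\theta^{*-1}\bPhi^*$, which produces $-\bigl(\bPhi^*-\theta^*\Omega^{-1}\Omega[\theta,\bPhi^*]\bigr)$, i.e. the stated $\bPhi^{*\mathcal{F}}$ up to an overall sign that is irrelevant for the linear homogeneous adjoint system.
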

In the above Proposition the scalar solutions $\theta$ and $\theta^*$ of the linear problem  \eqref{eq:lin-dKP} and its adjoint \eqref{eq:lin-dKP-a} can be replaced by vectorial solutions $\Theta:\ZZ^N\to\DD^P$ and $\Theta^*:\ZZ^N\to(\DD^P)^T$, where the number of components of both column- and row-vectors is the same. We obtain then \emph{vectorial binary} (and also vectorial Darboux and adjoint Darboux) transformations. Vectorial binary transformations can be obtained as superposition of scalar binary Darboux transformations, which follows from the following observation. 
\begin{Prop}[\cite{TQL}]
\label{th:perm-fund}
Assume the following splitting of the data of the vectorial binary Darboux
transformation
\begin{equation*}
\Theta = \left( \begin{array}{c} 
\Theta^a \\ \Theta^b \end{array} \right),\qquad
\Theta^* = \left( \begin{array}{cc} 
\Theta_{a}^{*}  & \Theta_{b}^{*} \end{array} \right),
\end{equation*}
associated with the partition $\DD^P = \DD^{P_a} \oplus \DD^{P_b}$,
which implies the following splitting of the potentials
\begin{equation*} 
\Omega[\Theta,\bPhi^*] =  \left( \begin{array}{c} 
\Omega[\Theta^a,\bPhi^*] \\ 
\Omega[\Theta^b,\bPhi^*]  \end{array} \right), \quad
\Omega[\Theta,\Theta^*]  = \left( \begin{array}{cc}
\Omega[\Theta^a,\Theta^*_a]  &
\Omega[\Theta^a,\Theta^*_b]  \\
\Omega[\Theta^b,\Theta^*_a]  &
\Omega[\Theta^b,\Theta^*_b] \end{array} \right), 
\end{equation*}
\begin{equation*}
\Omega[\bPhi,\Theta^*]  = \left( \! \begin{array}{cc} 
\Omega[\bPhi,\Theta^*_a]   &
\Omega[\bPhi,\Theta^*_b] \end{array} \! \right).
\end{equation*} 
Then the vectorial binary Darboux transformation is equivalent to the following
superposition of vectorial binary Darboux transformations:\\
1) Transformation $\bPhi\to\bPhi^{\{a\}}$ and $\bPhi^*\to\bPhi^{*\{a\}}$ with the data 
$\Theta^a$, $\Theta_{a}^*$ and the corresponding
potentials $\Omega[\bPhi,\Theta^*_a]$, 
$\Omega[\Theta^a,\Theta^*_a]$, and
$\Omega[\Theta^a,\bPhi^*]$
\begin{align}
\label{eq:fund-vect-a}
\bPhi^{\{a\}} & = \bPhi -
\Omega[\bPhi,\Theta^*_a]
\Omega[\Theta^a,\Theta^*_a]^{-1}
\Theta^a_,
\\
\bPhi^{*\{a\}} & = \bPhi^* - \Theta^*_{a}
\Omega[\Theta^a,\Theta^*_a]^{-1}
\Omega[\Theta^a,\bPhi^*].
\end{align}
2) Application on the result the vectorial fundamental transformation with the
transformed data
\begin{align*}
{\Theta}^{b\{a\}} & = \Theta^b -
\Omega[\Theta^b,\Theta^*_a]
\Omega[\Theta^a,\Theta^*_a]^{-1}
\Theta^a,
\\
{\Theta}_{b}^{*\{a\}} & = \Theta_{b}^* - 
\Theta^*_{a}
\Omega[\Theta^a,\Theta^*_a]^{-1}
\Omega[\Theta^a, \Theta_{b}^*],
\end{align*}
and the potentials
\begin{align*}
{\Omega}[\Theta^b,\bPhi^*]^{\{a\}} & =
\Omega[\Theta^b,\bPhi^*] - 
\Omega[\Theta^b,\Theta^*_a]
\Omega[\Theta^a,\Theta^*_a]^{-1}
\Omega[\Theta^a,\bPhi^*]=
\Omega[{\Theta}^{b\{a\}},\bPhi^{*\{a\}}],
\\
{\Omega}[\Theta^b,\Theta^*_b]^{\{a\}} & =
\Omega[\Theta^b,\Theta^*_b] - 
\Omega[\Theta^b,\Theta^*_a]
\Omega[\Theta^a,\Theta^*_a]^{-1}
\Omega[\Theta^a,\Theta^*_b]=
\Omega[{\Theta}^{b\{a\}},{\Theta}_b^{*\{a\}}],
\\
{\Omega}[\bPhi,\Theta^*_b]^{\{a\}}  & =
\Omega[\bPhi,\Theta^*_b] - 
\Omega[\bPhi,\Theta^*_a]
\Omega[\Theta^a,\Theta^*_a]^{-1}
\Omega[\Theta^a,\Theta^*_b]=
\Omega[\bPhi^{\{a\}},{\Theta}_b^{*\{a\}}],
\end{align*}
which gives
\begin{align*} 
\bPhi^{\{a,b\}}  &= \bPhi^{\{a\}} - 
{\Omega}[\bPhi,\Theta^*_b]^{\{a\}}
[{\Omega}[\Theta^b,\Theta^*_b]^{\{a\}}]^{-1}
\Theta^{b\{a\}},\\
\bPhi^{*\{a,b\}}  &= \bPhi^{*\{a\}} - 
\Theta_b^{^*\{a\}}
[{\Omega}[\Theta^b,\Theta^*_b]^{\{a\}}]^{-1}
\Omega[{\Theta}^{b\{a\}},\bPhi^{*\{a\}}].
\end{align*}
\end{Prop}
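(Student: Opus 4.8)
The plan is to recognize the whole statement as the non-commutative Schur-complement (block Gaussian elimination) factorization of the $P\times P$ potential matrix $\Omega[\Theta,\Theta^*]$, with the once-transformed potentials playing the role of the Schur complements. Starting from the single-step vectorial binary transformation $\bPhi^{\mathcal{F}} = \bPhi - \Omega[\bPhi,\Theta^*]\,\Omega[\Theta,\Theta^*]^{-1}\,\Theta$ of the previous Proposition, I would write it in the block form induced by the splitting,
\[
\bPhi^{\mathcal{F}} = \bPhi - \begin{pmatrix} \Omega[\bPhi,\Theta^*_a] & \Omega[\bPhi,\Theta^*_b] \end{pmatrix} \begin{pmatrix} \Omega[\Theta^a,\Theta^*_a] & \Omega[\Theta^a,\Theta^*_b] \\ \Omega[\Theta^b,\Theta^*_a] & \Omega[\Theta^b,\Theta^*_b] \end{pmatrix}^{-1} \begin{pmatrix} \Theta^a \\ \Theta^b \end{pmatrix},
\]
and abbreviate $A=\Omega[\Theta^a,\Theta^*_a]$, $B=\Omega[\Theta^a,\Theta^*_b]$, $C=\Omega[\Theta^b,\Theta^*_a]$, $D=\Omega[\Theta^b,\Theta^*_b]$, with Schur complement $S = D - C A^{-1} B$. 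The key observation is that $S$ is \emph{exactly} the transformed potential ${\Omega}[\Theta^b,\Theta^*_b]^{\{a\}}$ appearing in the statement.

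Next I would insert the standard non-commutative block-inversion identity, which expresses the inverse of the $2\times 2$ block matrix through $A^{-1}$ and $S^{-1}$, and collect terms. The single-step transform then rearranges into
\[
\bPhi^{\mathcal{F}} = \bPhi - \Omega[\bPhi,\Theta^*_a]\,A^{-1}\Theta^a - \left( \Omega[\bPhi,\Theta^*_b] - \Omega[\bPhi,\Theta^*_a] A^{-1} B \right) S^{-1} \left( \Theta^b - C A^{-1} \Theta^a \right).
\]
The first two terms are precisely $\bPhi^{\{a\}}$ of \eqref{eq:fund-vect-a}, while the three remaining factors are, term by term, ${\Omega}[\bPhi,\Theta^*_b]^{\{a\}}$, $[{\Omega}[\Theta^b,\Theta^*_b]^{\{a\}}]^{-1}$ and $\Theta^{b\{a\}}$ as defined in the statement; this is exactly the claimed composition $\bPhi\to\bPhi^{\{a\}}\to\bPhi^{\{a,b\}}$. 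The adjoint side $\bPhi^{*\mathcal{F}}$ follows from the symmetric computation, grouping $\Theta^*=\begin{pmatrix}\Theta^*_a & \Theta^*_b\end{pmatrix}$ on the left and the column $\Omega[\Theta,\bPhi^*]$ on the right.

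It then remains to justify the three consistency identities asserting that the transformed potentials assembled from the old data, e.g. ${\Omega}[\Theta^b,\bPhi^*]^{\{a\}}$, coincide with the potentials $\Omega[\cdot^{\{a\}},\cdot^{\{a\}}]$ recomputed from the once-transformed fields. Since by \eqref{eq:omega-p-p} any such potential is determined by its first-order increment up to an additive constant, I would verify each identity by checking that both sides share the same increment. The engine is the shift rule for the inverse block, obtained from \eqref{eq:omega-p-p} via $A_{(i)} = A + \Theta^a(\Theta^*_a)_{(i)}$,
\[
(\Omega[\Theta^a,\Theta^*_a]^{-1})_{(i)} - \Omega[\Theta^a,\Theta^*_a]^{-1} = - \Omega[\Theta^a,\Theta^*_a]^{-1}_{(i)}\, \Theta^a (\Theta^*_a)_{(i)}\, \Omega[\Theta^a,\Theta^*_a]^{-1};
\]
feeding this, together with increments such as $\Omega[\Theta^b,\bPhi^*]_{(i)}-\Omega[\Theta^b,\bPhi^*] = \Theta^b\bPhi^*_{(i)}$, into the shift of ${\Omega}[\Theta^b,\bPhi^*]^{\{a\}}$ and regrouping should yield exactly ${\Theta}^{b\{a\}}(\bPhi^{*\{a\}})_{(i)}$, the defining increment of $\Omega[{\Theta}^{b\{a\}},\bPhi^{*\{a\}}]$. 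Matching the integration constants then closes the identity, and the other two potential identities are handled the same way.

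The main obstacle is that everything lives over a non-commutative division ring $\DD$, so the block-inversion formula, the Schur-complement identity, and every regrouping must be carried out with the factors in the correct order, with no entry ever commuted past another; this is what forces the particular left/right placement of $A^{-1}$ and $S^{-1}$ in the formulas above. A secondary subtlety is that the $\Omega$ potentials are fixed by \eqref{eq:omega-p-p} only up to integration constants, so the consistency identities are genuine assertions about matching those constants and not merely about the difference equations, which is precisely where the bookkeeping of the shift rule must be done with care.
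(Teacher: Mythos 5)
The paper itself gives no proof of this Proposition: it is imported verbatim from \cite{TQL} as a known ``observation'', so your argument has to stand on its own --- and it does. The Schur-complement reduction is sound: with $A=\Omega[\Theta^a,\Theta^*_a]$, $B=\Omega[\Theta^a,\Theta^*_b]$, $C=\Omega[\Theta^b,\Theta^*_a]$, $D=\Omega[\Theta^b,\Theta^*_b]$, $S=D-CA^{-1}B$, the block-inversion formula
\begin{equation*}
\left( \begin{array}{cc} A & B \\ C & D \end{array} \right)^{-1}
= \left( \begin{array}{cc} A^{-1}+A^{-1}BS^{-1}CA^{-1} & -A^{-1}BS^{-1} \\ -S^{-1}CA^{-1} & S^{-1} \end{array} \right)
\end{equation*}
holds over any ring in which $A$ and $S$ are invertible (the generic situation here), and substituting it into the one-step formula regroups, exactly as you claim, into
$\bPhi^{\mathcal F}=\bPhi-\Omega[\bPhi,\Theta^*_a]A^{-1}\Theta^a-\bigl(\Omega[\bPhi,\Theta^*_b]-\Omega[\bPhi,\Theta^*_a]A^{-1}B\bigr)S^{-1}\bigl(\Theta^b-CA^{-1}\Theta^a\bigr)=\bPhi^{\{a,b\}}$,
with the symmetric computation on the adjoint side. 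Your plan for the three potential identities also goes through: writing $u=\Theta^a$, $v=(\Theta^*_a)_{(i)}$, your shift rule for $A^{-1}$ combined with the push-through identities $A_{(i)}^{-1}u=A^{-1}u(1+vA^{-1}u)^{-1}$ and $vA_{(i)}^{-1}=(1+vA^{-1}u)^{-1}vA^{-1}$ lets one factor the increment of $\Omega[\Theta^b,\bPhi^*]^{\{a\}}$ as $\bigl(\Theta^b-C_{(i)}A_{(i)}^{-1}\Theta^a\bigr)\bigl(\bPhi^*_{(i)}-vA^{-1}\Omega[\Theta^a,\bPhi^*]\bigr)=\Theta^{b\{a\}}\bigl(\bPhi^{*\{a\}}\bigr)_{(i)}$, which is precisely the defining increment \eqref{eq:omega-p-p} of $\Omega[\Theta^{b\{a\}},\bPhi^{*\{a\}}]$; the displayed equalities in the statement then just fix the admissible integration constant, as you correctly note.

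One step you leave implicit should be said explicitly: for step 2) to be a bona fide vectorial binary Darboux transformation you must know that $\Theta^{b\{a\}}$ and $\Theta^{*\{a\}}_b$ solve the linear problem and its adjoint with the once-transformed potentials $U^{\{a\}}_{ij}$. This requires no new computation --- it is the preceding Proposition (the parts giving $\bPhi^{\mathcal F}$ and $\bPhi^{*\mathcal F}$ as solutions of the transformed systems) applied with data $(\Theta^a,\Theta^*_a)$ to the additional solutions $\Theta^b$ and $\Theta^*_b$ --- but without it the phrase ``application of the vectorial fundamental transformation with the transformed data'' is not yet justified. With that sentence added, and with the standing genericity assumption that $A$ and $S$ are invertible, your proof is complete.
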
 
\begin{Rem}
The same final result we obtain starting from the binary transformation with the data
$\Theta^b$, $\Theta_{b}^*$ and suitably transforming the other part of the data.
\end{Rem}

\subsection{The $A$-type root lattice description of symmetries of the Hirota equation}
The definition of Desargues maps, as given above, does not take into account symmetries of the underlying Veblen and Desargues configurations; the symmetry of the Desargues configuration was discussed in relation to four dimensional consistency of the Hirota system (in its various forms) in \cite{Bobenko-talk}. A symmetric description of the Laplace sequences of two dimensional lattices of planar lattices, and the closely related discrete SKP equation in terms of the face centered cubic (FCC) lattice (which is the $Q(A_3)$ root lattice) is due to W. Schief~\cite{Schief-talk}. We generalized these results in \cite{Dol-AN} where the Desargues maps were studied from the point of view of the root lattices of type $A$ and of the corresponding affine Weyl groups. Our research was also motivated by relevance of the theory of root systems and Weyl groups in the theory of the discrete Painlev\'{e} equations \cite{Noumi,NY-AN,Sakai}.

\begin{figure}
\begin{center}
\includegraphics[width=8cm]{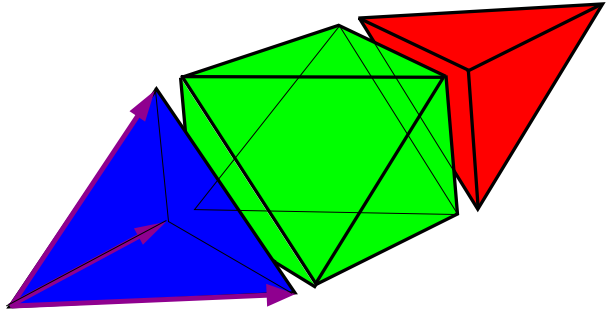}
\end{center}
\caption{A fundamental parallelogram (with the corresponding basis of vectors) of 
the $Q(A_3)$ root lattice 
decomposed into its Delaunay tiles: two tetrahedra $P(1,3)$ and $P(3,3)$, and
the octahedron $P(2,3)$.}
\label{fig:A3-cube}
\end{figure}
Recall~\cite{Bourbaki,ConwaySloane} that $N$-dimensional root lattice $Q(A_N)$
is generated by vectors along the edges of regular $N$-simplex. If we take the vertices of the simplex to be 
the vectors $\boldsymbol{e}_i = (0,\dots , \stackrel{i}{1}, \dots , 0)$ of the canonical basis in $\RR^{N+1}$ then the generators are 
$\boldsymbol{\varepsilon}^i_j = \boldsymbol{e}_i - \boldsymbol{e}_j$, 
$1\leq i \neq j \leq N+1$.
This identifies the root lattice as the set of all vectors $(m_1, \dots ,
m_{N+1}) \in \ZZ^{N+1}$ of integer coordinates with zero sum
$m_1 + \cdots + m_{N+1} = 0 $. The scalar product $(\cdot | \cdot)$ in the corresponding hyperspace
$\VV=\{ (x_1, \dots , x_{N+1}) \in \RR^{N+1}| \, x_1 + \cdots + x_{N+1} = 0 \}$
is inherited from the ambient $\RR^{N+1}$. The \emph{simple roots} of the lattice are the vectors $\boldsymbol{\alpha}_i = \boldsymbol{e}_{i} - \boldsymbol{e}_{i+1}$, 
$1\leq i \leq N$.

The \emph{holes} in the lattice are the points of $\VV$ that are
locally maximally distant from the lattice~\cite{ConwaySloane,MoodyPatera}. The convex hull of the lattice
points closest to a hole is called the \emph{Delaunay
polytope}. The Delaunay polytopes of the root lattice $Q(A_N)$
form a tessellation 
of $\VV$ by convex polytopes $P(k,N)$, $k=1,\dots,N$, see Figure~\ref{fig:A3-cube}. 
In particular, tiles of type $P(1,N)$ are congruent to the initial $N$-simplex generating the lattice. One can colour two dimensional facets (always triangles) of all Delaunay polytopes using two colours: a black triangle belongs to a tile $P(1,N)$, and a white triangle does not. In particular, all two dimensional facets of $N$-simplices $P(N,N)$ are white.

The \emph{Weyl group} $W_0(A_N)$ is the Coxeter group
generated by reflections $r_i$, $1\leq i \leq N$,
with respect of the
hyperplanes through the origin and orthogonal to the corresponding
simple roots 
\begin{equation} \label{eq:ri-action}
r_i:  \boldsymbol{v} \mapsto \boldsymbol{v} - 
2\frac{(\boldsymbol{v} | \boldsymbol{\alpha}_i )}
{(\boldsymbol{\alpha}_i | \boldsymbol{\alpha}_i )}
\boldsymbol{\alpha}_i.
\end{equation}
The group $W_0(A_N)$ is isomorphic to the symmetric group $S_{N+1}$ which act by
permuting the vectors $\boldsymbol{e}_i$, $1\leq i \leq N+1$; the generators
$r_i$ are identified then with the transpositions $\sigma_i=(i,i+1)$. The \emph{affine Weyl group} $W(A_N)$ is the Coxeter group
generated by $r_1, r_2, \dots , r_N$ and by an
additional affine reflection $r_0$
\begin{equation}\label{eq:r0-action}
r_0:  \boldsymbol{v} \mapsto \boldsymbol{v} - \left( 1 - 
2\frac{(\boldsymbol{v} | \tilde{\boldsymbol{\alpha}} )}
{(\tilde{\boldsymbol{\alpha}} | \tilde{\boldsymbol{\alpha}} )}\right) 
\tilde{\boldsymbol{\alpha}},
\end{equation}
where $\tilde{\boldsymbol{\alpha}}= 
\boldsymbol{e}_{1} - \boldsymbol{e}_{N+1}$ is the highest root. It is well known fact~\cite{MoodyPatera} that the affine Weyl group acts on the Delaunay tiling by permuting tiles within each class $P(k,N)$.

Let us introduce $\ZZ^N$ coordinates in the root lattice $Q(A_N)$ 
by the identification $\ZZ^N = \sum_{i=1}^N
\ZZ\boldsymbol{\varepsilon}^{N+1}_i= Q(A_N)$.
Then the Desargues maps can be redefined as maps $\phi:Q(A_N)\to\PP^M(\DD)$ such that the vertices of each basic $N$-simplex $P(1,N)$ are mapped into collinear points. 
Such a characterization exhibits the affine Weyl group symmetry of Desargues maps.
\begin{Th} \label{th:AW-sym}
If $\phi:Q(A_N)\to\PP^M$ is a Desargues map then also for any element $w$ 
of the affine Weyl group $W(A_N)$ the map $\phi \circ w$ is a Desargues map,
where we consider the natural action of $W(A_N)$ on the root lattice
$Q(A_N)$. 
\end{Th}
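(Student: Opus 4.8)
The plan is to deduce the statement from the intrinsic reformulation of the defining condition recorded just above, together with the geometric fact that $W(A_N)$ permutes the tiles of the Delaunay tessellation of $\VV$ within each class $P(k,N)$. Recall that, in the root-lattice picture, $\phi\colon Q(A_N)\to\PP^M(\DD)$ is a Desargues map exactly when the images under $\phi$ of the $N+1$ vertices of every tile $T$ of the class $P(1,N)$ are collinear. The merit of this formulation is that being a Desargues map is now expressed purely in terms of the Delaunay tiling of the root lattice, with no reference to a distinguished set of coordinate directions, so that any lattice symmetry preserving the tiling will manifestly act on Desargues maps.

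First I would isolate the two properties of an element $w\in W(A_N)$ acting on $Q(A_N)$ that the argument needs. The first is that $w$ is a bijection of $Q(A_N)$ which extends to an affine isometry of $\VV$; hence $w$ carries the Delaunay tessellation onto itself, taking polytopes to polytopes and vertices to vertices. The second, which is the essential input, is that $w$ sends each tile of the class $P(1,N)$ to a tile of the same class $P(1,N)$ — this is the quoted property of the $W(A_N)$-action on the tiling.

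Granting these, the verification is a relabelling of vertices. Fix $w\in W(A_N)$ and an arbitrary tile $T$ of class $P(1,N)$ with vertices $v_0,\dots,v_N$. Then $w(T)$ is again a tile of class $P(1,N)$, with vertices $w(v_0),\dots,w(v_N)$, so the Desargues property of $\phi$ gives that $\phi(w(v_0)),\dots,\phi(w(v_N))$ are collinear. Since $(\phi\circ w)(v_k)=\phi(w(v_k))$, the $\phi\circ w$-images of the vertices of $T$ are collinear; as $T$ ranges over all tiles of the class, $\phi\circ w$ satisfies the reformulated defining condition and is therefore a Desargues map. Alternatively one may reduce to the Coxeter generators $r_0,r_1,\dots,r_N$ and argue by induction on the length of a reduced word for $w$, using at each step that $\psi\mapsto\psi\circ r$ preserves the Desargues property for a single generator $r$.

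The entire force of the theorem sits in the class-preservation statement; the rest is bookkeeping. I expect the only real obstacle to be there, and specifically in the orientation-reversing generators: each of $r_1,\dots,r_N$ and the affine reflection $r_0$ reverses orientation of $\VV$, so one might fear that it interchanges the two simplex classes $P(1,N)$ and $P(N,N)$ and thereby destroys the Desargues property. That this does not happen is exactly the content of the quoted description of the action: the reflecting walls are those orthogonal to the roots together with the affine wall of the highest root $\tilde{\boldsymbol{\alpha}}$, and each such reflection preserves the black/white colouring of the two-dimensional facets introduced above, hence keeps every class $P(k,N)$ invariant. That some genuine restriction is being used here is seen from the central inversion $\boldsymbol{v}\mapsto-\boldsymbol{v}$, which interchanges $P(1,N)$ and $P(N,N)$ but, for $N\ge 2$, does not belong to $W(A_N)$.
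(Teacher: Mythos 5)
Your proof is correct and takes essentially the same route as the paper: there the theorem is presented as an immediate consequence of redefining Desargues maps as those sending the vertices of every Delaunay tile of class $P(1,N)$ to collinear points, combined with the quoted fact (from Moody--Patera) that the affine Weyl group $W(A_N)$ permutes the Delaunay tiles within each class $P(k,N)$ --- exactly your two ingredients and relabelling argument. Your supplementary observations on orientation-reversal and on the central inversion $\boldsymbol{v}\mapsto-\boldsymbol{v}$ (which swaps $P(1,N)$ and $P(N,N)$ but lies outside $W(A_N)$ for $N\geq 2$) are correct and go slightly beyond what the paper makes explicit.
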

Before considering algebraic consequences (on the level of the non-commutative Hirota system) of the above geometric approach to Desargues maps let us discuss configurations of points and lines derived as images of particular tiles $P(k,L)$. First notice, that because the tile $P(2,3)$ has six vertices, and its faces are four black triangles and four white triangles, then its Desargues map image produces (see Figure~\ref{fig:Des-P23}) the configuration of six points and four lines --- each point is incident with two lines, and each line is incident with three points.
\begin{figure}
\begin{center}
\includegraphics[width=11cm]{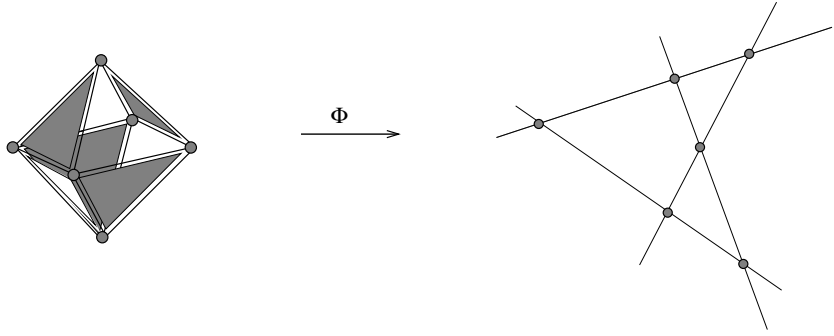}
\end{center}
\caption{The Veblen configuration $(6_2,4_3)$ as the image of the tile $P(2,3)$ under a Desargues map, and definition of the Laplace-Darboux maps \cite{Schief-talk}.}
\label{fig:Des-P23}
\end{figure}
We remark that such point of view was used~\cite{Schief-talk} by W.~Schief to provide a symmetric description of the Laplace sequence of two dimensional discrete conjugate nets~\cite{Sauer2,DCN}, which he called Laplace-Darboux maps of the FCC lattice; see also section~\ref{sec:D-QL}.

Next we consider image of a tile $P(3,4)$, where an analysis along the lines described in~\cite{MoodyPatera} demonstrates that it consists of: (i)~$10$ vertices, 
(ii)~$10$ black triangles (and no black tetrahedra) of which exactly $3$ meet at any vertex of the tile,
(iii)~$5$ facets $P(2,3)$.
Desargues maps produce then a configuration of (i)~$10$ points and (ii)~$10$ lines --- each line is incident with $3$ points, and each point is incident with $3$ lines, moreover (iii)~the configuration contains exactly $5$ Veblen configurations. It is known~\cite{Levi} that the last property select the Desargues configuration among other configurations satisfying the first two. 

Finally let us present algebraic counterpart of the geometric considerations on symmetries of Desargues maps and of the non-commutative Hirota system. Notice that the linear system \eqref{eq:lin-dKP} in the present notation reads (here $n\in Q(A_N)$ denotes an arbitrary point of the root lattice)
\begin{equation} \label{eq:lin-dKP-N+1}
\bPhi^{N+1}(n+\boldsymbol{\varepsilon}^{N+1}_i) - 
\bPhi^{N+1}(n+\boldsymbol{\varepsilon}^{N+1}_j) =  
\bPhi^{N+1}(n) 
U_{ij}^{N+1}(n),  \qquad 1\leq i \ne j \leq N,
\end{equation}
with the corresponding potentials $\rho^{N+1}_i$, $1=1,\dots,N$, such that
\begin{equation*}
U_{ij}^{N+1}(n) = \left[ \rho^{N+1}_i(n)\right]^{-1}
 \rho^{N+1}_i(n+\boldsymbol{\varepsilon}^{N+1}_j).
\end{equation*}
The upper index $N+1$ denotes a "sector", i.e. the choice of a basis along edges of the initial $N$-simplex of the lattice. Another basis
$\{ \boldsymbol{\varepsilon}^i_j \}$, where the index $i$ is fixed, and
$j\neq i$, is geometrically equivalent to the previous one, and should give a similar linear problem.
\begin{Prop} \label{prop:lin-i}
The functions $\bPhi^{i}\colon Q(A_N) \to \DD^{M+1}$
given by 
\begin{equation} \label{eq:phi-i}
\bPhi^{i}(n) = (-1)^{(n|\boldsymbol{\varepsilon}^{N+1}_i)}
\bPhi^{N+1}(n)
\left[ \rho^{N+1}_i(n)\right]^{-1},
\end{equation}
satisfy the linear system in the $i$-th sector
\begin{equation} \label{eq:lin-dKP-i}
\bPhi^{i}(n+\boldsymbol{\varepsilon}^{i}_j) - 
\bPhi^{i}(n+\boldsymbol{\varepsilon}^{i}_k) =  
\bPhi^{i}(n) 
U_{jk}^{i}(n),  \qquad i, j , k \quad \text{distinct,}
\end{equation}
where
\begin{equation}
U_{jk}^{i}(n) = \left[ \rho^{i}_j(n)\right]^{-1}
 \rho^{i}_j(n+\boldsymbol{\varepsilon}^{i}_k),
\end{equation}
and the potentials $\rho^{i}_j$ are given by
\begin{equation} \label{eq:rho-ij}
\rho^{i}_j(n) = \begin{cases}
\rho^{N+1}_j(n) \left[ \rho^{N+1}_i(n)\right]^{-1} , & \qquad j \neq N+1,\\
\left[ \rho^{N+1}_i(n)\right]^{-1} , & \qquad j = N+1.
\end{cases}
\end{equation}
In consequence the functions $U^\ell_{ij}$ satisfy the non-commutative Hirota system
\begin{align} \label{eq:alg-comp-U-l}
& U^\ell_{ij}(n) + U^\ell_{ji}(n) = 0, \qquad  
U^\ell_{ij}(n) + U^\ell_{jk}(n) + U^\ell_{ki}(n) = 0,\\
& \label{eq:U-rho-l} 
U^\ell_{kj}(n)U^\ell_{ki}(n+\boldsymbol{\varepsilon}^\ell_j) = 
U^\ell_{ki}(n) U^\ell_{kj}(n+\boldsymbol{\varepsilon}^\ell_i),
\end{align}
where the index $\ell$ is fixed, and all other indices $i,j,k$ are different from $\ell$, mutually distinct and, range from $1$ to $N+1$.
\end{Prop}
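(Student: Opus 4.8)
The plan is to verify the linear system \eqref{eq:lin-dKP-i} by direct substitution of the definition \eqref{eq:phi-i}, guided by the geometric observation that the $i$-th sector is nothing but the family of basic simplices $P(1,N)$ read off from a different apex. Concretely, I would first record the elementary lattice identities $\boldsymbol{\varepsilon}^i_j = \boldsymbol{\varepsilon}^{N+1}_j - \boldsymbol{\varepsilon}^{N+1}_i$ for $j\neq N+1$ and $\boldsymbol{\varepsilon}^i_{N+1} = -\boldsymbol{\varepsilon}^{N+1}_i$, together with the scalar products $(\boldsymbol{\varepsilon}^{N+1}_j \,|\, \boldsymbol{\varepsilon}^{N+1}_i) = 1 + \delta_{ij}$. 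The latter control the sign prefactor: writing $\sigma(n) = (-1)^{(n|\boldsymbol{\varepsilon}^{N+1}_i)}$, one gets $\sigma(n+\boldsymbol{\varepsilon}^{N+1}_j) = -\sigma(n)$ for $j\neq i$ and $\sigma(n+\boldsymbol{\varepsilon}^{N+1}_i) = \sigma(n)$, equivalently $\sigma(n+\boldsymbol{\varepsilon}^i_j) = -\sigma(n)$ for $j\neq i,N+1$ and $\sigma(n+\boldsymbol{\varepsilon}^i_{N+1}) = \sigma(n)$.

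The computational heart is then a substitution that I would organize by setting $m = n - \boldsymbol{\varepsilon}^{N+1}_i$, so that $n = m+\boldsymbol{\varepsilon}^{N+1}_i$ is the apex and the three arguments entering \eqref{eq:lin-dKP-i} become $n = m_{(i)}$, $n+\boldsymbol{\varepsilon}^i_j = m_{(j)}$ and $n+\boldsymbol{\varepsilon}^i_k = m_{(k)}$ (with the convention $m_{(N+1)} := m$ when an index equals $N+1$), all expressed through $+\boldsymbol{\varepsilon}^{N+1}$-shifts of $m$. Abbreviating $\psi = \bPhi^{N+1}$ and $r = \rho^{N+1}_i$, the key simplification comes from \eqref{eq:lin-dKP-N+1} in the form $\psi_{(j)} = \psi_{(i)} + \psi\,U^{N+1}_{ji}$ together with the antisymmetry $U^{N+1}_{ji} = -U^{N+1}_{ij} = -r^{-1}r_{(j)}$ coming from \eqref{eq:def-rho}; substituting these yields the crucial telescoping identity $\psi_{(j)}r_{(j)}^{-1} = \psi_{(i)}r_{(j)}^{-1} - \psi\,r^{-1}$, in which the inhomogeneous term $\psi\,r^{-1}$ is independent of $j$.

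With this identity in hand the verification splits into the two cases dictated by \eqref{eq:rho-ij}. For $j,k\neq N+1$ the terms $\psi\,r^{-1}$ cancel in the difference and one is left with $\bPhi^i(n+\boldsymbol{\varepsilon}^i_j) - \bPhi^i(n+\boldsymbol{\varepsilon}^i_k) = -\sigma(m)\psi_{(i)}(r_{(j)}^{-1} - r_{(k)}^{-1})$, which is manifestly a right $\DD$-multiple of $\bPhi^i(n) = \sigma(m)\psi_{(i)}r_{(i)}^{-1}$; reading off the multiplier gives $U^i_{jk}(n) = r_{(i)}(r_{(k)}^{-1} - r_{(j)}^{-1})$. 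For $k=N+1$ the shift $\boldsymbol{\varepsilon}^i_{N+1} = -\boldsymbol{\varepsilon}^{N+1}_i$ turns one argument into $m$ and the computation is analogous. It then remains to check that the multiplier so obtained equals $[\rho^i_j]^{-1}\rho^i_{j(k)}$ with $\rho^i_j$ defined by \eqref{eq:rho-ij}; after clearing the $r$-factors this reduces, via the insertion $\rho^{N+1}_j(m_{(i)})^{-1}\rho^{N+1}_j(m_{(k)}) = U^{N+1}_{ji}(m)^{-1}U^{N+1}_{jk}(m)$, exactly to the three-term compatibility relation \eqref{eq:alg-comp-U} (combined with antisymmetry), which holds by hypothesis.

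Finally, since $\bPhi^i$ satisfies a linear system of literally the same shape as \eqref{eq:lin-dKP} with potentials of the gauge form \eqref{eq:def-rho}, the compatibility conditions \eqref{eq:alg-comp-U}--\eqref{eq:U-rho} apply verbatim in every sector and immediately give the non-commutative Hirota system \eqref{eq:alg-comp-U-l}--\eqref{eq:U-rho-l}, the antisymmetry $U^\ell_{ij} + U^\ell_{ji} = 0$ being built into the antisymmetric left-hand side of \eqref{eq:lin-dKP-i}. I expect the main obstacle to be bookkeeping rather than conceptual: because $\DD$ is only a division ring one must keep every factor in its correct left/right position, and one must treat the $j=N+1$ (respectively $k=N+1$) directions separately, these being precisely the source of the case distinction in \eqref{eq:rho-ij}. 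No structural input beyond \eqref{eq:lin-dKP-N+1}, \eqref{eq:def-rho}, the antisymmetry of $U^{N+1}$, and the compatibility \eqref{eq:alg-comp-U} is needed.
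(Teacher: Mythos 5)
Your proof is correct, and since the paper states Proposition \ref{prop:lin-i} without proof (the result is imported from \cite{Dol-AN}), your direct verification is in effect the missing argument; it is also the natural one. All the key steps check out: the substitution $m=n-\boldsymbol{\varepsilon}^{N+1}_i$ places the three arguments of \eqref{eq:lin-dKP-i} at $m_{(i)}$, $m_{(j)}$, $m_{(k)}$ (with the convention $m_{(N+1)}:=m$), and your telescoping identity $\psi_{(j)}r_{(j)}^{-1}=\psi_{(i)}r_{(j)}^{-1}-\psi\, r^{-1}$ (where $\psi=\bPhi^{N+1}$, $r=\rho^{N+1}_i$, shifts taken at $m$) is precisely what is needed. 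For $j,k\neq N+1$ the $j$-independent term cancels and the multiplier is $r_{(i)}(r_{(k)}^{-1}-r_{(j)}^{-1})$; for $k=N+1$ one gets $-r_{(i)}r_{(j)}^{-1}$, which agrees with $[\rho^i_j(n)]^{-1}\rho^i_j(n+\boldsymbol{\varepsilon}^i_{N+1})$ by antisymmetry alone; for $j=N+1$ the formula $\rho^i_{N+1}=[\rho^{N+1}_i]^{-1}$ gives $r_{(i)}r_{(k)}^{-1}$ on the nose. Matching the generic multiplier with $[\rho^i_j(n)]^{-1}\rho^i_j(n+\boldsymbol{\varepsilon}^i_k)$ reduces, exactly as you say, to $U_{ij}^{-1}U_{ik}+U_{ji}^{-1}U_{jk}=1$ (all potentials $U^{N+1}$ evaluated at $m$), which is the three-term relation \eqref{eq:alg-comp-U} combined with antisymmetry.

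The concluding step is also sound, though worth stating precisely: the antisymmetry and the three-term relations in \eqref{eq:alg-comp-U-l} follow from the linear system \eqref{eq:lin-dKP-i} because a nonzero vector over a division ring has trivial right annihilator, so the right multiplier is unique and inherits the antisymmetry of the left-hand side; and \eqref{eq:U-rho-l} holds identically once the $U^\ell$'s are in potential form, both sides telescoping to $[\rho^\ell_k(n)]^{-1}\rho^\ell_k(n+\boldsymbol{\varepsilon}^\ell_i+\boldsymbol{\varepsilon}^\ell_j)$.
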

Notice that 
changing the sector can be understood as a symmetry transformation of
both the
linear and nonlinear systems. Other generators of symmetries are 
translations and permutations of indices within a fixed sector. 

Before we describe the affine Weyl group symmetries of the non-commutative Hirota system we observe that
\begin{equation} \label{eq:ccc}
\rho^i_j(n) \rho^k_i(n) = \rho^k_j(n), \qquad \text{where by definition}
\quad \rho^i_i = 1.
\end{equation}
It is natural to associate $\rho^i_j(n)$ with the oriented edge $[n, n+ \boldsymbol{\varepsilon}^{i}_{j}]$ of the root lattice, and define the action of the affine Weyl group on the functions $\rho^i_j$ through its action on the edges. In particular, the actions of the generators $r_i$, $i=0,\dots ,N$, of the affine Weyl group is given by 
\begin{equation} \label{eq:ri-rij}
(r_i . \rho^j_k)(n) = \rho^{\sigma_i(j)}_{\sigma_i(k)}(r_i(n)),
\end{equation}
where $\sigma$'s are the transpositions
$\sigma_i = (i, i+1)$, $i=1,\dots ,N$, and $\sigma_0 = (1,N+1)$. The action on the potentials $U^i_{jk}$ follows from action on the potentials $\rho^i_j$.

We distinguish the simple root functions 
$\rho^i = \rho^i_{i+1}$, $i=1, \dots, N$, which are attached to the simple 
roots directions. 
One can check, using the condition \eqref{eq:ccc}, that for $i< j$ we have
\begin{equation} \label{eq:r-ij}
\rho^i_j = \rho^{j-1} \dots \rho^i, \qquad \rho^j_i = (\rho^i_j)^{-1}.
\end{equation}
Let us define also the function $\rho^0$ as attached to the direction of the
root $\boldsymbol{\alpha}_0 = -\tilde{\boldsymbol{\alpha}}$,
which by \eqref{eq:r-ij} gives
\begin{equation}
\rho^0 = (\rho^N \rho^{N-1} \dots \rho^1)^{-1}.
\end{equation}
\begin{Prop}
The action of the generators $r_i$, $i=0,\dots ,N$, of the affine Weyl group on
the functions $\rho^j$, $j=0,\dots ,N$, is given by 
\begin{equation}
(r_i . \rho^j)(n) = [(\rho^i)^{-a_{ji}^U}\rho^j (\rho^i)^{-a_{ji}^L}](r_i(n)),
\end{equation}
where $a_{ji}^U$ and $a_{ji}^L$ are the "upper" and the "lower" parts
of the Cartan matrix of the affine Weyl group $W(A_N)$
\begin{equation}
a_{ij}^L = \left[ \begin{array}{rrrrr} 1 & 0 &  &  & -1 \\
-1 & 1 & 0 &  &  \\
 & -1 & 1 &  \ddots &     \\
 & & \ddots& \ddots & 0 \\
0 & & & -1 & 1 
\end{array} \right], \qquad
a_{ij}^U = \left[ \begin{array}{rrrrr} 1 & -1 &  &  & 0 \\
0 & 1 & -1 &  &  \\
 & 0 & 1 &  \ddots  &    \\
 & & \ddots &\ddots & -1 \\
-1 &  & & 0 & 1 
\end{array} \right].
\end{equation}
\end{Prop}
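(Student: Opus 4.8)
The plan is to evaluate the action on each simple root function straight from its definition on oriented edges and then translate the answer back into the simple root functions $\rho^1,\dots,\rho^N$. First I would record the uniform description $\rho^j=\rho^j_{j+1}$, valid for all $j=0,\dots,N$ once the superscript and subscript labels are read cyclically modulo $N+1$: using \eqref{eq:r-ij} one checks $\rho^0=\rho^{N+1}_1$, so that $\rho^j$ is indeed attached to the edge in the direction $\boldsymbol{\alpha}_j$ for every $j$. With this convention, \eqref{eq:ri-rij} gives at once
\begin{equation*}
(r_i . \rho^j)(n) = \rho^{\sigma_i(j)}_{\sigma_i(j+1)}(r_i(n)),
\end{equation*}
so the whole statement reduces to expressing the edge function $\rho^{\sigma_i(j)}_{\sigma_i(j+1)}$ as a product of simple root functions by means of \eqref{eq:r-ij} and the cocycle identity \eqref{eq:ccc} (these being pointwise identities, they may be applied at the shifted argument $r_i(n)$).

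The computation then splits according to the position of the node $j$ relative to $i$ in the cyclic Dynkin diagram of $W(A_N)$. If $j=i$, then $\sigma_i$ transposes the two endpoints of the edge and $\rho^i_{i+1}\mapsto\rho^{i+1}_i=(\rho^i)^{-1}$, matching $a^U_{ii}=a^L_{ii}=1$. If $j$ is neither equal nor adjacent to $i$, then $\sigma_i$ fixes both endpoints, so $(r_i.\rho^j)(n)=\rho^j(r_i(n))$, matching $a^U_{ji}=a^L_{ji}=0$. The essential case is $j$ adjacent to $i$: here $\sigma_i$ moves exactly one endpoint of the edge, so the new labels differ by two, and \eqref{eq:r-ij} factors the result as a product of the two simple root functions $\rho^i$ and $\rho^j$, with $\rho^i$ appearing on the left when $j=i-1$ and on the right when $j=i+1$ (cyclically). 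Matching this against the proposition amounts to observing that the unique nonzero off-diagonal entry among $a^U_{ji},a^L_{ji}$ equals $-1$ and lies on the superdiagonal exactly when $j=i-1$, forcing the left factor $(\rho^i)^{1}$ through $a^U_{ji}$, and on the subdiagonal exactly when $j=i+1$, forcing the right factor through $a^L_{ji}$.

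The two delicate points, which I expect to be the main obstacle, both concern the affine node. First, the adjacencies $0\sim 1$ and $0\sim N$ force the use of the corner entries $a^L_{0,N}=a^U_{N,0}=-1$ together with their transposes, and the corresponding edge functions such as $\rho^{N+1}_2$ or $\rho^N_1$ must be rewritten through the definition $\rho^0=(\rho^N\cdots\rho^1)^{-1}$; verifying, for instance, $(r_0.\rho^1)(n)=(\rho^1\rho^0)(r_0(n))$ and $(r_N.\rho^0)(n)=(\rho^0\rho^N)(r_N(n))$ requires a short but careful telescoping of the product $\rho^N\cdots\rho^1$. Second, since $\DD$ is only a division ring the order of the two factors is genuine information, so one must confirm that the left/right placement dictated by \eqref{eq:r-ij} agrees with the upper/lower split of the Cartan matrix in every adjacent case, rather than merely matching exponents; this ordering is precisely the content encoded by the pair $(a^U_{ji},a^L_{ji})$. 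Once these boundary cases are settled, collecting the three cases yields the stated formula for all $i,j\in\{0,\dots,N\}$.
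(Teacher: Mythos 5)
Your proof is correct, and it is essentially the paper's (implicit) argument: the paper states this Proposition without proof, and your derivation — applying the edge action \eqref{eq:ri-rij} to $\rho^j=\rho^j_{j+1}$ read cyclically (with $\rho^0=\rho^{N+1}_1$) and then rewriting the resulting edge functions via \eqref{eq:ccc} and \eqref{eq:r-ij} — is exactly the computation the preceding formulas are set up to yield. Your case analysis is sound, including the two genuinely delicate points: the left/right placement of $\rho^i$ (left factor $\rho^i\rho^{i-1}$ for $j=i-1$ via $a^U$, right factor $\rho^{i+1}\rho^i$ for $j=i+1$ via $a^L$) and the wraparound identities at the affine node such as $\rho^{N+1}_2=\rho^1\rho^0$ and $\rho^N_1=\rho^0\rho^N$, all of which check out against \eqref{eq:ccc}.
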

\begin{Rem}
It is remarkable \cite{ABS-oct}, compare also \cite{Bobenko-talk}, that in the commutative case the Hirota equation (in its various forms) is the only one system of the octahedron type multidimensionally consistent on the root lattice of type $A$.
\end{Rem}
\subsection{Desargues maps and multidimensional quadrilateral lattices}
\label{sec:D-QL} 
Let $N=2K-1$ be odd, we will first describe $\ZZ^K = Q(B_K)$ root sublattices of $Q(A_{2K-1})$. We split the standard basis vectors $(\boldsymbol{e}_i)_{i=1}^{2K}$ into $K$ pairs (we order pairs and elements within the pairs). We fix the splitting $( \boldsymbol{e}_1, \boldsymbol{e}_2)$, \dots  $( \boldsymbol{e}_{2K-1}, \boldsymbol{e}_{2K})$, any other splitting can be obtained by action of the symmetric group $S_{2K}=W_0(A_{2K-1})$. 
Define vectors $\boldsymbol{E}_i = \boldsymbol{e}_{2i-1} - \boldsymbol{e}_{2i}$, $i=1,\dots ,K$, which satisfy the orthogonality relations
\begin{equation*}
( \boldsymbol{E}_i | \boldsymbol{E}_j ) = 2 \delta_{ij}, 
\end{equation*}
and generate the $\ZZ^K$ sublattice (with rescaled standard scalar product) in the root lattice $Q(A_{2K-1})$. Notice that the change of orientation of the vector $\boldsymbol{E}_i$ (which means transposition of the corresponding pair) or transposition of the vectors $\boldsymbol{E}_i$ and $\boldsymbol{E}_j$ (which means corresponding permutation of pairs) are automorphisms of the sublattice. Therefore we have $(2K)!/ (2^K K!) = (2K-1)!!$ different such sublattices (the origin is fixed) in $Q(A_{2K-1})$. 

The above mentioned automorphisms of the lattice can be described as reflections in (the hyperplanes orthogonal to) the vectors $\boldsymbol{E}_i$ or to the vectors $\boldsymbol{E}_i - \boldsymbol{E}_j$. The minimal (in order to obtain all above mentioned symmetries which generate the group $W_0(B_K) = \ZZ_2^K \rtimes S_K$) set of such vectors is the standard simple root system of type $B_K$
\begin{equation}
\boldsymbol{E}_1 - \boldsymbol{E}_2, \quad \boldsymbol{E}_2 - \boldsymbol{E}_3, \dots , \quad \boldsymbol{E}_{K-1}- \boldsymbol{E}_K, \quad \boldsymbol{E}_K.
\end{equation}

We are now in a position to describe relation of Desargues maps to the multidimensional quadrilateral lattice maps~\cite{DCN,MQL}, which are maps $\psi\colon\ZZ^K\to\PP^M(\DD)$ characterized by the condition that for arbitrary $m\in\ZZ^K$ and for any pair of indices $i\neq j$ the points $\psi(m)$, $\psi(m+\boldsymbol{E}_i)$, $\psi(m+\boldsymbol{E}_j)$ and $\psi(m+\boldsymbol{E}_i+\boldsymbol{E}_j)$ are coplanar. Instead of the Hirota system (or discrete modified KP system) we have then, for $K\geq 3$, the so called discrete Darboux equations~\cite{BoKo}. The Desargues map condition of collinearity of three points may be considered as degeneration of the quadrilateral map condition of coplanarity of four points. However it turns out~\cite{Dol-Des} that both objects are almost equivalent, as described below.
\begin{Th}
Given Desargues map $\phi\colon Q(A_{2K-1})\to \PP(\DD)^M$, then its restriction to $Q(B_K)$ sublattice is a quadrilateral lattice map, and the Desargues map can be recovered from its quadrilateral map restriction. 
\end{Th}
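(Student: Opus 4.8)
The plan is to treat the two assertions separately — that the restriction is a quadrilateral lattice map, and that $\phi$ is reconstructible from it — both reducing to the octahedral tile $P(2,3)$ and its image, the Veblen configuration $(6_2,4_3)$ of Figure~\ref{fig:Des-P23}. First I would record the precise collinearity content of the Desargues condition in the $A$-picture. By Proposition~\ref{prop:lin-i} the collinearity of $\phi$ on the basic simplices holds in every sector, so the following clean reformulation follows: for any $q\in\ZZ^{2K}$ of coordinate sum $1$ and any three distinct indices $a,b,c\in\{1,\dots,2K\}$, the images $\phi(q-\boldsymbol{e}_a)$, $\phi(q-\boldsymbol{e}_b)$, $\phi(q-\boldsymbol{e}_c)$ are collinear (these triples are exactly the black triangles, the faces of the tiles $P(1,2K-1)$).

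For the first assertion, fix $m\in\ZZ^K$, set $n=\sum_\ell m_\ell\boldsymbol{E}_\ell$ and two directions $i\ne j$, and abbreviate the four points $\psi(m),\psi(m+\boldsymbol{E}_i),\psi(m+\boldsymbol{E}_j),\psi(m+\boldsymbol{E}_i+\boldsymbol{E}_j)$ by $P_{00},P_{10},P_{01},P_{11}$. Introduce the single auxiliary point $R=\phi(n+\boldsymbol{e}_{2i-1}-\boldsymbol{e}_{2j})$. Two applications of the reformulated criterion, with $q=n+\boldsymbol{e}_{2i-1}$ and with $q=n+\boldsymbol{e}_{2i-1}+\boldsymbol{e}_{2j-1}-\boldsymbol{e}_{2j}$, show that $P_{00},P_{10},R$ are collinear and that $P_{11},P_{01},R$ are collinear. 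The two lines so produced meet at $R$, hence are coplanar, and their union contains all four of $P_{00},P_{10},P_{01},P_{11}$; this is the asserted coplanarity. Geometrically $R$ is one of the two extra vertices of the octahedron $P(2,3)$ spanned by the four quadrilateral points, and the argument is nothing but the planarity of the complete quadrilateral $(6_2,4_3)$.

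For the reconstruction I would reverse this last observation: the same two collinearities exhibit $R$ as the intersection of the line through $P_{00},P_{10}$ with the line through $P_{01},P_{11}$, so $R$ — a point off the $Q(B_K)$ sublattice — is recovered from restriction data alone. To reach every point of $Q(A_{2K-1})$ I would run an induction on the \emph{level} $\tfrac12\sum_\ell|s_\ell(p)|$, where $s_\ell(p)=p_{2\ell-1}+p_{2\ell}$ is the $\ell$-th pair-sum; level zero is precisely $Q(B_K)$. Given $p$ of positive level, choose a pair $v$ with $s_v<0$ and a pair $u$ with $s_u>0$; the two black-triangle lines through $p$ determined by $q=p+\boldsymbol{e}_{2v-1}$ and by $q=p+\boldsymbol{e}_{2v}$, each using the indices $2u-1,2u$, pass respectively through the pairs of points $\{p+\boldsymbol{e}_{2v-1}-\boldsymbol{e}_{2u-1},\,p+\boldsymbol{e}_{2v-1}-\boldsymbol{e}_{2u}\}$ and $\{p+\boldsymbol{e}_{2v}-\boldsymbol{e}_{2u-1},\,p+\boldsymbol{e}_{2v}-\boldsymbol{e}_{2u}\}$, all four of which have level one less than $p$. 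By the inductive hypothesis these four points are already known, so both lines are known, and $p$ is their unique intersection; this recovers $\phi(p)$ and completes the induction. Each step is exactly the Laplace--Darboux reconstruction of one vertex of an octahedron $P(2,3)$ from four of its neighbours.

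The conceptual core — that four pairwise-meeting, non-concurrent lines are coplanar — is immediate, so the genuine work lies in the bookkeeping and, above all, in the genericity hypotheses. I must ensure throughout that the three points of each black triangle are distinct, that the two lines produced at each step are genuinely distinct (so that they meet in a single point and the intersection determines $\phi(p)$ unambiguously), and that the auxiliary points are not forced to coincide with the data; these are the non-degeneracy conditions already assumed after \eqref{eq:lin-A}. The main point requiring care is verifying that the level-lowering step can always be carried out, which needs only one positive and one negative pair-sum — guaranteed whenever the level is positive since the pair-sums total zero.
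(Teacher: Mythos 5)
Your overall strategy is exactly the paper's: coplanarity of an elementary quadrilateral is deduced from two Desargues collinearities sharing one auxiliary off-sublattice point (the intersection of opposite sides, i.e.\ a Laplace--Darboux point), and the Desargues map is reconstructed by intersecting lines through already-recovered points --- your induction on the level $\tfrac12\sum_\ell|s_\ell(p)|$ is a pointwise, and in fact more explicit, organization of the paper's ``iterate the Laplace transformations $\mathcal{L}_{ij}$ over $\ell\in Q(A_{K-1})$''. However, your opening ``clean reformulation'' of the Desargues condition picks the \emph{wrong orientation class} of simplices, and this is a genuine error, not a convention. With the paper's identification $\boldsymbol{\varepsilon}_i\mapsto\boldsymbol{\varepsilon}^{2K}_i=\boldsymbol{e}_i-\boldsymbol{e}_{2K}$, the defining collinear triples $\phi(n)$, $\phi(n+\boldsymbol{\varepsilon}^{2K}_i)$, $\phi(n+\boldsymbol{\varepsilon}^{2K}_j)$ are of the form $\phi(q+\boldsymbol{e}_a)$, $\phi(q+\boldsymbol{e}_b)$, $\phi(q+\boldsymbol{e}_c)$ with $q=n-\boldsymbol{e}_{2K}$ of coordinate sum $-1$; so the black simplices are the sets $\{q+\boldsymbol{e}_a\}_a$, $\operatorname{sum}(q)=-1$. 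Your family $\{q-\boldsymbol{e}_a\}_a$ with $\operatorname{sum}(q)=+1$ is the \emph{other} (white, $P(2K-1,2K-1)$) class: the two families are disjoint, since equality of vertex sets would give $2Kq+\mathbf{1}=2Kq'-\mathbf{1}$, i.e.\ $q'-q=\tfrac1K\mathbf{1}\notin\ZZ^{2K}$, and the point reflection $n\mapsto -n$ relating them is neither a lattice translation nor an element of $W(A_{2K-1})$. Consequently your concrete claims --- e.g.\ that $P_{00}$, $P_{10}$, $R$ are collinear for $R=\phi(n+\boldsymbol{e}_{2i-1}-\boldsymbol{e}_{2j})$ --- do not follow from the Desargues condition and are generically false for a Desargues map.

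The slip is propagated consistently, so the repair is mechanical rather than conceptual. With the correct class, the sides $P_{00}P_{10}$ and $P_{01}P_{11}$ meet at the \emph{other} extra vertex of your octahedron, $R'=\phi(n+\boldsymbol{e}_{2j-1}-\boldsymbol{e}_{2i})$ (use $q=n-\boldsymbol{e}_{2i}$ and $q=n+\boldsymbol{E}_j-\boldsymbol{e}_{2i}$), and in the reconstruction step you must choose $s_v>0$, $s_u<0$ and the lines given by $q=p-\boldsymbol{e}_{2v-1}$ and $q=p-\boldsymbol{e}_{2v}$, so that the four neighbouring points again drop one level. Equivalently, your text as written proves the theorem for $\phi\circ(-\operatorname{id})$; since $-\operatorname{id}$ preserves $Q(B_K)$ and sends quadrilateral lattice maps to quadrilateral lattice maps, this is equivalent to the stated theorem --- but that reduction must be said explicitly, precisely because $-\operatorname{id}$ is not among the paper's affine Weyl symmetries. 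With either repair your argument is correct and parallels the paper's proof; a minor difference worth keeping is that the paper establishes the quadrilateral property for all intermediate lattices $\psi^\ell$ (it needs this to speak of Laplace transformations of quadrilateral lattices), whereas your induction uses only the Desargues lines themselves, which slightly economizes the reconstruction half.
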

\begin{proof}
Let us introduce the following change of $\ZZ^{2K-1}$ coordinates in the lattice $Q(A_{2K-1})$
\begin{equation*}
n = \sum_{i=1}^{2K-1} n_i \boldsymbol{\varepsilon}^{2K}_i =
-\sum_{j=1}^K m_j \boldsymbol{E}_j - \sum_{j=1}^K \ell_j 
\boldsymbol{\varepsilon}^{2K}_{2j}, \qquad \text{i.e.} \quad
n_{2j-1} = m_j, \quad n_{2j} = - m_j - \ell_j, \qquad 1\leq j\leq K,
\end{equation*}
where we also defined $n_{2K} = -(n_1 + \ldots + n_{2K-1})$, which implies $\ell_1 + \ldots + \ell_K =0$. We have therefore $m= \sum_{j=1}^K m_j \boldsymbol{E}_j \in Q(B_K)=\ZZ^K$, and $\ell = (\ell_1, \ldots, \ell_K) \in Q(A_{K-1})$. 

For fixed $\ell\in Q(A_{K-1})$ define the
map $\psi^\ell:\ZZ^K\to \PP^M$ given by
$\psi^\ell(m) = \phi(n)$, where the relation between $n$ and $m$ and $\ell$ is
given above. For $\ell = (0,\ldots,0)$ we obtain therefore the restriction of our Desargues map $\Phi$ to the root lattice $Q(B_K)$ which we would like to show is a quadrilateral lattice map. We will demonstrate that for arbitrary $\ell\in Q(A_K)$, which we will use in the next part of the proof. 

To show planarity of elementary quadrilaterals in variables $m_i$ and $m_K$, $i\neq K$, notice that upon identification of $\psi^\ell$ with $\phi$ the point $\psi^\ell_{(K)}$ should be identified with $\phi_{(2K-1)}$. Moreover the point
$\psi^{\ell - \boldsymbol{e}_i +\boldsymbol{e}_j}_{(i)}$, where $\boldsymbol{e}_i$ is the vector of the canonical basis in $\ZZ^K\supset Q(A_{K-1})$,
should be identified with $\phi_{(2i-1)}$. Because the points $\phi$, $\phi_{(2K-1)}$ and $\phi_{(2i-1)}$ are collinear from the very definition of Desargues maps we have therefore collinearity of $\psi^\ell$, $\psi^\ell_{(K)}$ and 
$\psi^{\ell - \boldsymbol{e}_i +\boldsymbol{e}_j}_{(i)}$ (see Figure~\ref{fig:D-QL-ij}). Similarly, the point $\psi^\ell_{(i)}$ should be identified with $\phi_{(2i-1,-2i)}$, and the point $\psi^\ell_{(ij)}$ with $\phi_{(2i-1,-2i,2K-1)}$. Again we have that $\psi^\ell_{(i)}$, $\psi^\ell_{(ij)}$ and $\psi^{\ell - \boldsymbol{e}_i +\boldsymbol{e}_j}_{(i)}$ are collinear, which together with the previous collinearity imply coplanarity of the points $\psi^\ell$, $\psi^\ell_{(K)}$, $\psi^\ell_{(i)}$ and $\psi^\ell_{(ij)}$.
\begin{figure}
\begin{center}
\includegraphics[width=16cm]{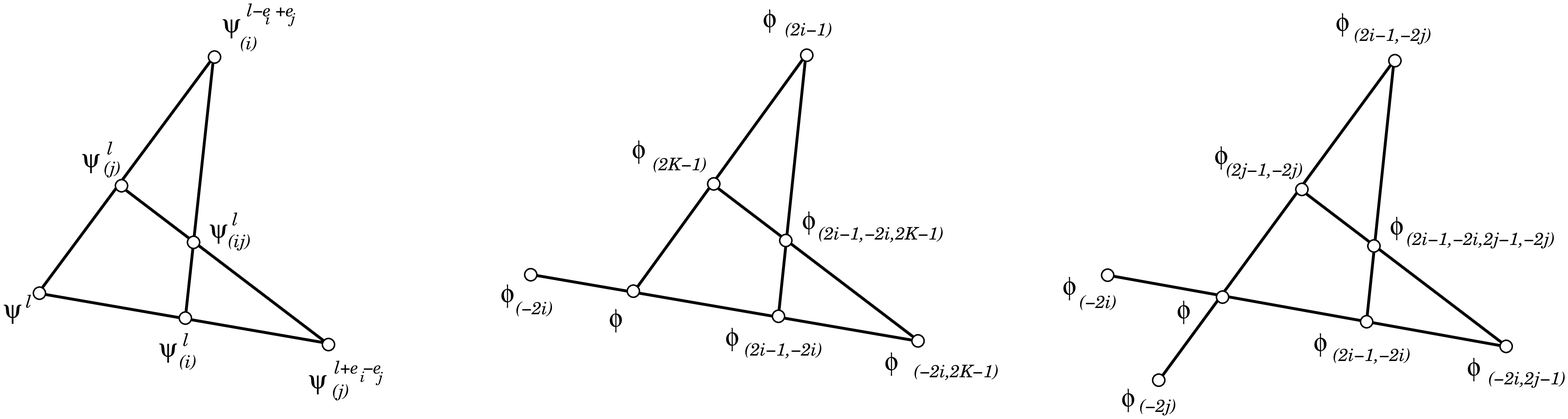}
\end{center}
\caption{Planarity of elementary quadrilaterals of $Q(A_{2K-1})$ Desargues maps restricted to $Q(B_K)$ root lattices.}
\label{fig:D-QL-ij}
\end{figure}
The corresponding calculation for variables $m_i$ and $m_j$, where $1\leq i \neq j < K$, was performed in \cite{Dol-Des}, and here we just refer to Figure~\ref{fig:D-QL-ij}. 

The transformation of the map $\psi^\ell\colon\ZZ^K\to \PP^M$ into 
$\psi^{\ell+\boldsymbol{e}_i -\boldsymbol{e}_j}\colon \ZZ^K\to \PP^M$ is called the Laplace transformation $\mathcal{L}_{ij}$ \cite{DCN,TQL}. Geometrically it is given by intersection of opposite tangent lines of planar quadrilaterals, as visualized in Figure~\ref{fig:D-QL-ij}. By iterative application to the restricted quadrilateral lattice map it allows to reconstruct the whole Desargues map $\phi$.
\end{proof}
\begin{Cor}
Any quadrilateral lattice map can be extended in the above described way to the corresponding Desargues map.
\end{Cor}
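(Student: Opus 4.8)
The plan is to run the reconstruction procedure from the proof of the Theorem in the opposite direction, now taking an \emph{arbitrary} quadrilateral lattice map $\psi\colon\ZZ^K\to\PP^M(\DD)$ as the initial datum. I set $\psi^{\boldsymbol{0}}=\psi$ and define the maps $\psi^\ell\colon\ZZ^K\to\PP^M(\DD)$ for all $\ell\in Q(A_{K-1})$ by iterating the Laplace transformations $\mathcal{L}_{ij}\colon\psi^\ell\mapsto\psi^{\ell+\boldsymbol{e}_i-\boldsymbol{e}_j}$; since $Q(A_{K-1})$ is generated by the vectors $\boldsymbol{e}_i-\boldsymbol{e}_j$, every $\ell$ is reached in this way. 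Using the bijection $n\leftrightarrow(m,\ell)$ of the Theorem between $Q(A_{2K-1})$ and the pairs $m\in Q(B_K)$, $\ell\in Q(A_{K-1})$, I then define $\phi\colon Q(A_{2K-1})\to\PP^M(\DD)$ by $\phi(n)=\psi^\ell(m)$. By construction $\phi|_{Q(B_K)}=\psi^{\boldsymbol{0}}=\psi$, so it remains to prove that $\phi$ is a Desargues map.

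Two ingredients are needed. First, the Laplace transformation $\mathcal{L}_{ij}$ sends a quadrilateral lattice map to a quadrilateral lattice map, so each $\psi^\ell$ is again a quadrilateral lattice; this is the classical geometric construction by intersection of opposite tangent lines of the planar quadrilaterals~\cite{DCN,TQL}, and its very definition places $\psi^{\ell+\boldsymbol{e}_i-\boldsymbol{e}_j}$ on the prescribed tangent lines of $\psi^\ell$. Second, once the family $\{\psi^\ell\}$ is in hand, the collinearities defining $\phi$ are read off from the identifications in the proof of the Theorem, now used in reverse: the collinearity of $\psi^\ell$, $\psi^\ell_{(K)}$, $\psi^{\ell-\boldsymbol{e}_i+\boldsymbol{e}_j}_{(i)}$ and of $\psi^\ell_{(i)}$, $\psi^\ell_{(ij)}$, $\psi^{\ell-\boldsymbol{e}_i+\boldsymbol{e}_j}_{(i)}$ is exactly the tangent-line incidence built into $\mathcal{L}_{ij}$, and under the coordinate change these become the collinearity of the simplex vertices $\phi$, $\phi_{(2K-1)}$, $\phi_{(2i-1)}$ and of $\phi_{(2i-1,-2i)}$, $\phi_{(2i-1,-2i,2K-1)}$, $\phi_{(2i-1)}$. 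These incidences, together with the analogous ones for the pairs $m_i,m_j$ with $i\neq j<K$ treated in~\cite{Dol-Des}, account for the collinearity of every triple of vertices of each basic $(2K-1)$-simplex, which is precisely the Desargues map condition.

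The main obstacle is the \emph{well-definedness} of the family $\{\psi^\ell\}$: for $K\geq 3$ the lattice $Q(A_{K-1})$ is genuinely multidimensional, so the value $\psi^\ell(m)$, and hence $\phi(n)$, must be shown independent of the sequence of Laplace transformations used to pass from $\boldsymbol{0}$ to $\ell$. This is exactly the permutability of the Laplace transformations: $\mathcal{L}_{ij}$ followed by $\mathcal{L}_{jk}$ must agree with $\mathcal{L}_{ik}$, and transformations in disjoint index pairs must commute. I would establish it by a local reduction, verifying consistency around the elementary loops of $Q(A_{K-1})$, where the statement reduces to the permutability of Laplace transformations on a three-dimensional quadrilateral sub-lattice — a consequence of the multidimensional consistency of the discrete Darboux system — and then propagating it over the whole lattice by induction. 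With the family thus well defined, $\phi$ is a single-valued Desargues map whose restriction to $Q(B_K)$ is $\psi$, which is the required extension in the way described in the Theorem.
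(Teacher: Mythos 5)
Your proposal is correct and follows essentially the same route as the paper: generate the family $\{\psi^\ell\}$, $\ell\in Q(A_{K-1})$, by iterating Laplace transformations, use the coordinate change $n\leftrightarrow(m,\ell)$ to assemble the Desargues map, and reduce everything to the compatibility relations $\mathcal{L}_{ij}\circ\mathcal{L}_{ji}=\mathrm{id}$, $\mathcal{L}_{jk}\circ\mathcal{L}_{ij}=\mathcal{L}_{ik}$, $\mathcal{L}_{ki}\circ\mathcal{L}_{ij}=\mathcal{L}_{kj}$. The only difference is that the paper simply cites these relations (and the resulting $Q(A_{K-1})$-parametrization) from the literature on quadrilateral lattices, whereas you sketch how to verify them via multidimensional consistency; that well-definedness issue you flag as the "main obstacle" is precisely what the cited results settle.
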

\begin{proof}
It is known~\cite{DCN,TQL} that the Laplace transformations $\mathcal{L}_{ij}$ can be defined for generic quadrilateral lattice maps giving new quadrilateral lattice maps. Moreover, they satisfy relations
\begin{equation*}
\mathcal{L}_{ij} \circ \mathcal{L}_{ji}  = \text{id} \; ,\qquad
\mathcal{L}_{jk} \circ \mathcal{L}_{ij} = \cL_{ik}, \qquad
\mathcal{L}_{ki} \circ \mathcal{L}_{ij} = \cL_{kj}, 
\end{equation*}
which allows to parametrize the
quadrilateral lattices generated from one quadrilateral lattice via the Laplace
transformations by points of the root lattice $Q(A_{K-1})$  \cite{DMMMS} and introduce, in addition to the initial $m\in \ZZ^K$ variable of the quadrilateral lattice map, the new $\ell\in Q(A_{K-1})$ variable. After the change to the $n\in Q(A_{2K-1})$ variable as above one obtains from an arbitrary quadrilateral lattice map the corresponding Desargues map.
\end{proof}
At this point we may study the quadrilateral lattice maps and the corresponding discrete Darboux equations, their Darboux-type transformations and reductions; apart from above cited works see also \cite{Bobenko-O,KoSchief2,AKV,DS-EMP,Shapiro} for geometric but also analytic (in the case of the field of complex or real numbers) tools to study such maps and corresponding solutions of the discrete Darboux system. We remark that the pioneering works of A.~Bobenko and U.~Pinkall and their collaborators on discrete isothermic surfaces~\cite{BP2,HeHP,Schief-C} can be directly incorporated in the theory of multidimensional lattices of planar quadrilaterals~\cite{BobSur-Proc,Doliwa-isoth}. Also discrete pseudospherical surfaces~\cite{BP1} together with more general discrete asymptotic surfaces~\cite{Sauer} can be considered as reductions of quadrilateral lattices indirectly via the Pl\"{u}cker embedding \cite{W-cong}; see also other related works~\cite{KonopelchenkoPinkall,BobenkoSchief,Schief-ass,Nieszporski,DNS-ass,Schroecker}. Investigation of quadrilateral lattices in the general division ring context has been initiated in~\cite{gaql}. For the relation of the quadrilateral lattice maps with Zamolodchikov's tetrahedron equation~\cite{Zamolodchikov} and related integrable quantum field models, which was an important motivation for results presented in the next section, see~\cite{BaMaSe,Sergeev-q3w}.

\section{The quantum plane structure maps}
\label{sec:CF}
In the previous part of the paper no commutativity on the level of the dependent variables was assumed. Obviously, all results obtained there are valid for commutative dependent variables as well, where also additional techniques are available. See~\cite{KWZ} for application of the algebro-geometric techniques and \cite{Dol-Des} for application of the non-local $\bar\partial$-dressing method in the complex field case, and \cite{BD} for modification of results of \cite{Krichever-difference,KWZ} to the finite field case. 

As it was observed in \cite{DoliwaSergeev-pentagon} the Desargues map equations~\eqref{eq:alg-cond}-\eqref{eq:shift-cond}, which in particular gauge give the non-commutative Hirota system \eqref{eq:alg-comp-U}-\eqref{eq:U-rho}, can be decomposed into two maps which are solutions of the functional pentagonal equation. Recall that a map $W:\XX\times \XX \to \XX\times \XX$ satisfies the functional (or set-theoretical) pentagon equation \cite{Zakrzewski} if
\begin{equation} \label{eq:pentagon-W-X}
W_{12} \circ W_{23} = W_{23} \circ  W_{13} \circ W_{12}, \qquad \text{in} \quad \XX \times \XX \times \XX,
\end{equation}
regarded as an equality of composite maps; here $W_{ij}$ acts as $W$ in $i$-th and $j$-th factors of the Cartesian product. 

Another observation made in \cite{DoliwaSergeev-pentagon} was that the ultra-locality restriction imposed on these maps leads to Weyl commutation relations and gives rise to the corresponding solutions of the quantum pentagon equation~\cite{BaajSkandalis}. Guided by the application of the quantum pentagon equation in the quantum group theory we will exploit the pentagonal maps related to the Hirota system in order to define a bialgebra structure in the quantum plane algebra. 

\subsection{The normalization map}
\subsubsection{Pentagon property of the normalization map}
The first part~\eqref{eq:alg-cond} of the Desargues map equations relates coefficients of linear equations~\eqref{eq:lin-A} of collinear points of nearest positive neighbours of $\phi$. We study in more detail such a relation for 
four distinct collinear points $A$, $B$, $C$, and $D$ (compare Figures~\ref{fig:Desargues-Veblen-normal} and~\ref{fig:square-lin-ABCD-W-n}). 
\begin{figure}
\begin{center}
\includegraphics[width=8cm]{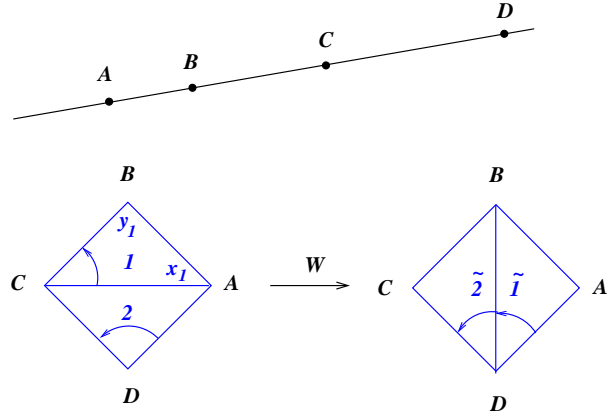}
\end{center}
\caption{Graphic representation of the linear relations for the normalization map $W$}
\label{fig:square-lin-ABCD-W-n}
\end{figure} 
Denote by $\bPhi_I$, $I=A,B,C,D$ corresponding homogeneous coordinates of the points and consider the following two pairs of linear relations
\begin{align*}
\bPhi_C = & \bPhi_A x_1  +  \bPhi_B  y_1, & 
\bPhi_D = &  \bPhi_A \tilde{x}_1 +  \bPhi_B \tilde{y}_1 , \\
\bPhi_D = & \bPhi_A  x_2 + \bPhi_C  y_2 , & 
\bPhi_D = &  \bPhi_B \tilde{x}_2 +  \bPhi_C  \tilde{y}_2. & 
\end{align*}
They define a birational map 
\begin{equation*}
W:\DD^2\times \DD^2 \ni ((x_1,y_1), (x_2, y_2)) \dashrightarrow 
((\tilde{x}_1,\tilde{y}_1), (\tilde{x}_2, \tilde{y}_2)) \in \DD^2\times \DD^2,
\end{equation*}
given explicitly by 
\begin{equation} \label{eq:W}
\tilde{x}_1 =  x_2 +  x_1 y_2, \qquad \tilde{y}_1 =   y_1 y_2,\qquad
\tilde{x}_2 =  -  y_1 x_1^{-1} x_2 , \qquad \tilde{y}_2 =  y_2 + x_1^{-1} x_2 .
\end{equation} 
We represent graphically the map $W$ as follows: linear relations between homogeneous coordinates of triplets of collinear points are represented by triangles. The information about the $x$ and $y$ coefficients of a linear equation is encoded in the arrow which starts at the edge pointing the $x$-coefficient point and ends at the edge pointing the $y$-coefficient point. The vertex where the arrow is placed represent the point with the coefficient equal to one, see Figure~\ref{fig:square-lin-ABCD-W-n}. 

Let us present an important property of the map $W$. It can be best seen if we add a fifth collinear point $E$ to the previous four points. We start from three linear relations visualized on Fig.~\ref{fig:pentagon} and perform the transformations according to the geometric description of the map as described on Fig.~\ref{fig:square-lin-ABCD-W-n}. There are two ways to change the linear systems, as illustrated on Fig.~\ref{fig:pentagon}, and both give the same result, which can be formulated as follows. 
\begin{figure}
\begin{center}
\includegraphics[width=13cm]{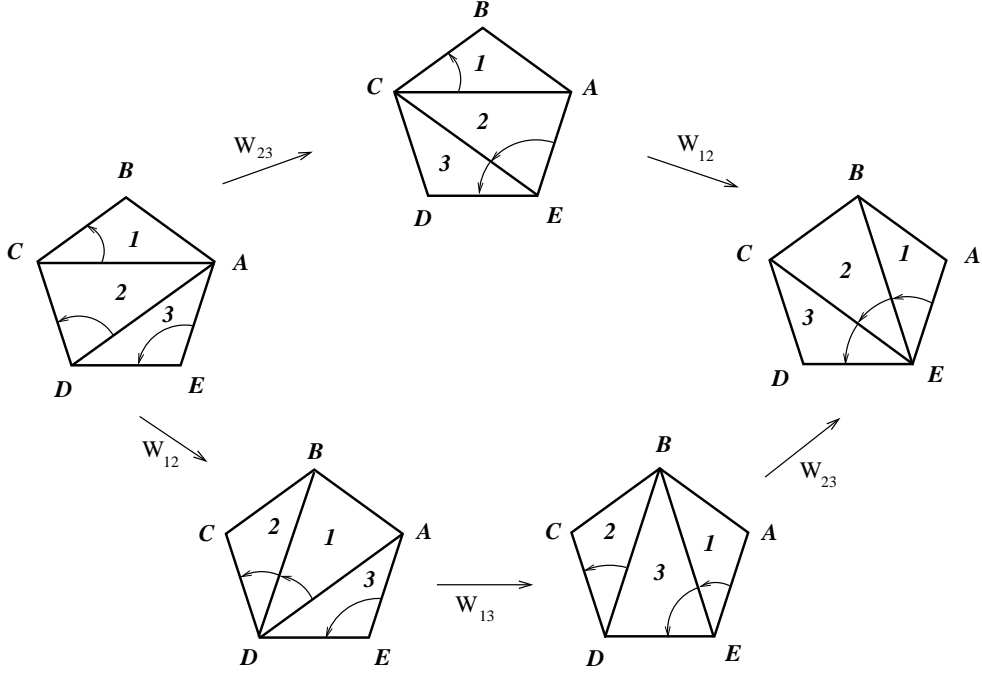}
\end{center}
\caption{The pentagon property of the map $W$}
\label{fig:pentagon}
\end{figure} 
\begin{Prop} \label{prop:W-pent}
The map $W:\DD^2\times \DD^2 \ni ((x_1,y_1), (x_2, y_2)) \dashrightarrow 
((\tilde{x}_1,\tilde{y}_1), (\tilde{x}_2, \tilde{y}_2)) \in \DD^2\times \DD^2 $ given by equations \eqref{eq:W} satisfies the functional pentagon equation \eqref{eq:pentagon-W-X}.
\end{Prop}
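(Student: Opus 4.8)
The plan is to prove \eqref{eq:pentagon-W-X} geometrically, by reading off its two sides as two different sequences of the elementary rearrangement \eqref{eq:W} applied to a system of linear relations among five collinear points, and then invoking uniqueness of such relations. Concretely, I would fix five points $A,B,C,D,E$ in general position on a line, with chosen homogeneous representatives $\bPhi_A,\dots,\bPhi_E\in\DD^{M+1}$, and record the ``fan from $A$'' system $\bPhi_C=\bPhi_A x_1+\bPhi_B y_1$, $\bPhi_D=\bPhi_A x_2+\bPhi_C y_2$, $\bPhi_E=\bPhi_A x_3+\bPhi_D y_3$, so that the three coefficient pairs form a generic point $((x_1,y_1),(x_2,y_2),(x_3,y_3))$ of $\XX\times\XX\times\XX$. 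The first and most important step is to confirm that the algebraic map \eqref{eq:W} is exactly the move which, from the relation of an intermediate point in terms of a base pair and the relation of a higher point in terms of the base and that intermediate point, returns the relation of the higher point in terms of the base pair together with its relation in terms of the remaining pair; this is precisely the elimination producing \eqref{eq:W}, and it must be carried out respecting the order of factors in $\DD$.

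Next I would trace both composites on this system. On the left, $W_{23}$ acts on the relations for $D$ and $E$ (base $A$, intermediate $D$), replacing them by ``$E$ from $A,C$'' and ``$E$ from $C,D$''; then $W_{12}$ acts on ``$C$ from $A,B$'' and ``$E$ from $A,C$'' (base $A$, intermediate $C$), yielding ``$E$ from $A,B$'' and ``$E$ from $B,C$''. Hence $W_{12}\circ W_{23}$ leaves $E$ expressed, in the three slots, in terms of the consecutive pairs $(A,B)$, $(B,C)$, $(C,D)$. On the right, $W_{12}$ first turns the initial data into ``$D$ from $A,B$'', ``$D$ from $B,C$'', ``$E$ from $A,D$''; then $W_{13}$ (base $A$, intermediate $D$) gives ``$E$ from $A,B$'' and ``$E$ from $B,D$''; finally $W_{23}$ (base $B$, intermediate $D$) gives ``$E$ from $B,C$'' and ``$E$ from $C,D$''. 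Thus $W_{23}\circ W_{13}\circ W_{12}$ ends at the very same three relations, namely $E$ in terms of $(A,B)$, $(B,C)$, $(C,D)$.

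The conclusion is then immediate: with fixed representatives, two distinct points have right-linearly independent homogeneous vectors, so the coefficient pair expressing a given point as a combination of two fixed others is uniquely determined; hence the pairs produced in each slot by the two composites must coincide, which is precisely the equality \eqref{eq:pentagon-W-X}. This is the ``I~see!'' content of Figure~\ref{fig:pentagon}, and the main obstacle is not the idea but the genericity bookkeeping needed to make it rigorous: one must ensure the five points are generic enough that every relation invoked along either path exists with invertible coefficients, so that each move lies in the domain of the birational map \eqref{eq:W}, and one must verify at every step that the roles of base, far vertex, intermediate and apex are assigned consistently (as I checked above). As a safeguard I would also confirm the identity by direct substitution of \eqref{eq:W} into both sides of \eqref{eq:pentagon-W-X} on a generic triple, where the only delicate point is the non-commutative ordering of the factors and of the inverses such as $x_1^{-1}$ occurring in \eqref{eq:W}.
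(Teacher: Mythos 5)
Your proposal is correct and takes essentially the same route as the paper: the paper establishes the pentagon property from exactly this five-collinear-point picture (Figure~\ref{fig:pentagon}), tracing the same two sequences of elementary re-expressions, and then records the direct substitution check ``for those not convinced,'' which you likewise keep as a safeguard. Your added uniqueness-of-coefficients step (valid because representatives of distinct points are right-linearly independent over $\DD$, so no gauge ambiguity enters for the map $W$, unlike for $W^G$) just makes the paper's geometric argument explicit.
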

\begin{proof}
Those who are not convinced by the above geometric arguments can check directly that
\begin{equation*}
W_{12} \circ W_{23} \left( \begin{matrix}
x_1 , y_1 \\ x_2 ,  y_2 \\ x_3 ,  y_3
\end{matrix} \right) = 
\left( \begin{array}{rl}
x_3 + x_2 y_3 + x_1 y_2 y_3, & y_1 y_2 y_3 \\ 
- y_1 x_1^{-1} (x_3 + x_2 y_3 ) , & y_2 y_3 + x_1^{-1} (x_3 + x_2 y_3 )  \\ 
- y_2 x_2^{-1} x_3 , & y_3 + x_2^{-1} x_3 
\end{array} \right) 
= W_{23} \circ W_{13} \circ W_{12}
\left( \begin{matrix}
x_1, y_1 \\ x_2, y_2 \\ x_3, y_3
\end{matrix} \right).
\end{equation*}
\end{proof}

\subsubsection{The ultra-local reduction of the normalization map}
Let us assume that the elements $x_i$, $y_i$ with different indices commute
\begin{equation} \label{eq:UL-xy}
x_i x_j = x_j x_i, \qquad y_i y_j = y_j y_i, \qquad x_i y_j = y_j x_i, \qquad i\neq j,
\end{equation}
We are interested in the situation where the same holds for output elements 
$\tilde{x}_i$, $\tilde{y}_i$ of the map $W$. Denote by $\Bbbk\in\mathcal{Z}(\DD)$ a fixed subfield of the center of the division ring, and let us make few technical (general position) assumptions:\\
1) The elements $x_i$, $y_i$, $i=1,2$, do not belong to the field $\Bbbk$ and are linearly  independent (as elements of the $\Bbbk$-vector space $\DD$).\\
2) The intersection of division hulls $\DD^{(i)}$ generated by elements $x_i,y_i$, $i=1,2$, is the field $\Bbbk$ only.

The following result was obtained in \cite{DoliwaSergeev-pentagon} for the inverse of the map $W$.
\begin{Prop}
If the normalization map $W$ preserves the ultra-locality conditions \eqref{eq:UL-xy} then, under the above general position conditions, there exists a $q\in\Bbbk^\times$ such that the Weyl commutation relations hold
\begin{equation} \label{eq:WCR-xy}
x_i y_i = q y_i x_i, \qquad i=1,2.
\end{equation}
\end{Prop}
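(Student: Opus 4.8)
The plan is to turn the requirement ``$W$ preserves \eqref{eq:UL-xy}'' into algebraic constraints on $x_i,y_i$ and then isolate, from these constraints, a quantity that is forced to be central. Concretely, I would demand that the output pairs $(\tilde x_i,\tilde y_i)$ satisfy the four ultra-locality relations
\[
\tilde x_1\tilde x_2=\tilde x_2\tilde x_1,\quad \tilde y_1\tilde y_2=\tilde y_2\tilde y_1,\quad \tilde x_1\tilde y_2=\tilde y_2\tilde x_1,\quad \tilde x_2\tilde y_1=\tilde y_1\tilde x_2,
\]
substitute the explicit expressions \eqref{eq:W}, and reduce each identity using the input relations \eqref{eq:UL-xy} (most importantly $y_2x_1^{-1}=x_1^{-1}y_2$ and $x_2y_1=y_1x_2$, valid since elements of different sectors commute). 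I expect the mixed relation $\tilde x_1\tilde y_2=\tilde y_2\tilde x_1$ to become an identity carrying no information, and the remaining three to collapse to one and the same equation, which after collecting the first-sector and second-sector factors reads
\[
(y_1x_1^{-1})(y_2x_2)=(x_1^{-1}y_1)(x_2y_2).
\]

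The second, decisive step uses general position assumption~2. All four factors are invertible, so the last equation rearranges to
\[
y_1^{-1}x_1y_1x_1^{-1}=x_2y_2x_2^{-1}y_2^{-1}.
\]
The left-hand side belongs to the division hull $\DD^{(1)}$ and the right-hand side to $\DD^{(2)}$; since they coincide, their common value lies in $\DD^{(1)}\cap\DD^{(2)}=\Bbbk$. Calling this value $q$, and noting that it is a product of invertible elements and hence nonzero, gives $q\in\Bbbk^\times$. Reading the two sides of the displayed equation separately, and using that $q$ is central, I would then obtain $x_1y_1=qy_1x_1$ and $x_2y_2=qy_2x_2$ \emph{with the same} $q$, which is precisely \eqref{eq:WCR-xy}. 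The fact that a single $q$ serves both sectors is automatic here, because it arises as the one common value of the two sides.

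The only genuinely delicate point is the first step's bookkeeping --- verifying that the three nontrivial conditions really do reduce to the single displayed equation rather than to independent constraints; this is where care with the non-commutative reorderings matters. General position assumption~1 plays a supporting role: requiring $x_i,y_i\notin\Bbbk$ and linearly independent over $\Bbbk$ guarantees that the hulls $\DD^{(i)}$ are strictly larger than $\Bbbk$, so that assumption~2 is a meaningful (non-vacuous) hypothesis and the resulting Weyl relation is not an artifact of a degenerate, commuting sector.
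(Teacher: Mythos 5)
Your proof is correct and follows essentially the same route as the paper's: impose ultra-locality on the outputs of $W$, reduce the nontrivial conditions to a single equation identifying an element of $\DD^{(1)}$ with an element of $\DD^{(2)}$, and invoke the general position hypothesis $\DD^{(1)}\cap\DD^{(2)}=\Bbbk$ to obtain one central $q$ serving both sectors. The paper's version is merely terser --- it records only the $\tilde{x}_1\tilde{x}_2$ commutation condition and remarks that the other three output conditions are redundant or trivially satisfied, exactly as your bookkeeping confirms.
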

\begin{proof}
Commutation between $\tilde{x}_1$ and $\tilde{x}_2$ leads directly to equality
\begin{equation*}
x_1 y_1 x_1^{-1} y_1^{-1} = x_2 y_2 x_2^{-1} y_2^{-1} ,
\end{equation*}
which by the second part of the general position conditions gives the statement. Other three ultra-locality conditions for the output elements give the same result or are trivially satisfied.
\end{proof}

Denote by $\Bbbk_q[x,y]$ the quantum Manin plane~\cite{Manin,Kassel} generated by indeterminates $x,y$ satisfying the Weyl $q$-commutation relations $xy = qyx$. 
The $\Bbbk$-subalgebra $\Bbbk_q[x_1,y_1,x_2,y_2]$ of division ring $\DD$ is isomorphic to $\Bbbk_q[x,y]^{\otimes 2}$. It is well known~\cite{BrownGoodearl} that both $\Bbbk_q[x,y]$ and $\Bbbk_q[x,y]^{\otimes 2}$ have division algebras of fractions (called algebras of quantum rational functions), denoted by $\Bbbk_q(x,y)$ and $\Bbbk_q(x_1,y_1,x_2,y_2)$, correspondingly.

It is easy to check that the output elements satisfy not only the ultra-locality condition (which we assumed) but also the map preserves the Weyl commutation relations.
\begin{Cor} \label{cor:W-aut}
The map $W$ provides automorphism of  the division algebra $\Bbbk_q(x_1,y_1,x_2,y_2)$.
\end{Cor}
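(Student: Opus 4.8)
The plan is to read the birational map $W$ as a substitution of generators and to show that this substitution is a $\Bbbk$-algebra automorphism of $D:=\Bbbk_q(x_1,y_1,x_2,y_2)$. Concretely, I would define a map $W_*$ on $D$ by sending the generators to the output elements,
\[
W_*(x_1)=\tilde{x}_1,\quad W_*(y_1)=\tilde{y}_1,\quad W_*(x_2)=\tilde{x}_2,\quad W_*(y_2)=\tilde{y}_2,
\]
observing first that each $\tilde{x}_i,\tilde{y}_i$ in \eqref{eq:W} is a Laurent expression in $x_1,y_1,x_2,y_2$ (the only inverse occurring being $x_1^{-1}$), hence a well-defined element of the division algebra $D$. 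The corollary then splits into two tasks: (i) that $W_*$ is a homomorphism, i.e. the images obey the same defining relations as the generators; and (ii) that $W_*$ is bijective.

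For (i) I would carry out the ``easy to check'' verification. The cross identities $\tilde{x}_1\tilde{x}_2=\tilde{x}_2\tilde{x}_1$, and the other ultra-locality relations, are exactly the content isolated in the preceding Proposition, so the essential new check is the pair of Weyl relations $\tilde{x}_i\tilde{y}_i=q\,\tilde{y}_i\tilde{x}_i$. A representative computation is $\tilde{x}_1\tilde{y}_1=(x_2+x_1y_2)y_1y_2$, which after using $x_2y_1=y_1x_2$, $x_1y_2=y_2x_1$ and $x_1y_1=q\,y_1x_1$ collapses to $y_1x_2y_2+q\,y_1y_2^2x_1$, while $q\,\tilde{y}_1\tilde{x}_1$ yields the same expression; the index-$2$ relation goes through identically, the inverse being handled via $x_1^{-1}y_1=q^{-1}y_1x_1^{-1}$. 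Once all defining relations of $\Bbbk_q[x,y]^{\otimes2}$ are confirmed for the images, $W_*$ is a genuine $\Bbbk$-algebra endomorphism of $D$.

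For (ii) injectivity is automatic: a unital endomorphism of a division ring has kernel a proper two-sided ideal, and a division ring is simple, so the kernel is $0$. Surjectivity is where the birationality of $W$ enters. I would invert \eqref{eq:W} inside $D$: the key observation is the factorization $\tilde{x}_1=x_2+x_1y_2=x_1(y_2+x_1^{-1}x_2)=x_1\tilde{y}_2$, giving $x_1=\tilde{x}_1\tilde{y}_2^{-1}$; likewise $y_1\tilde{y}_2=\tilde{y}_1-\tilde{x}_2$ gives $y_1=(\tilde{y}_1-\tilde{x}_2)\tilde{y}_2^{-1}$, after which $x_2$ and $y_2$ are recovered as rational expressions in the $\tilde{x}_i,\tilde{y}_i$. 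Since every nonzero element of $D$ is invertible, these expressions make sense and exhibit $x_1,y_1,x_2,y_2\in\operatorname{Im}W_*$, so $W_*$ is onto, hence an automorphism.

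The only genuine care needed — and where I expect bookkeeping to be delicate — is the noncommutativity in both the relation-check of step (i) and the inversion of step (ii): one must track the order of factors and the placement of $q$ and of $x_1^{-1}$ at every step. A minor formal point is the passage from the polynomial subalgebra $\Bbbk_q[x_1,y_1,x_2,y_2]$ to its division algebra of fractions; this is legitimate because $W_*$ carries nonzero elements to invertible ones (indeed the explicit inverse formulas show it is bijective on $D$), so the universal property of the Ore localization applies and $W_*$ extends uniquely to all of $D$.
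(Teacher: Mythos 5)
Your computational core is correct and is essentially the paper's own route: the paper's entire justification for this corollary is the sentence preceding it (the output elements satisfy the ultra-locality and Weyl relations, "easy to check"), with bijectivity taken for granted from the explicit invertibility of the formulas \eqref{eq:W}; your step (i) computations (e.g.\ $\tilde{x}_1\tilde{y}_1=q\tilde{y}_1\tilde{x}_1$) and your inversion identities $\tilde{x}_1=x_1\tilde{y}_2$, $y_1\tilde{y}_2=\tilde{y}_1-\tilde{x}_2$ all check out. One misattribution: the preceding Proposition proves the \emph{converse} implication (preservation of ultra-locality forces the Weyl relations); it does not assert that the outputs of $W$ are ultra-local when the Weyl relations hold. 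So the cross relations such as $\tilde{x}_1\tilde{x}_2=\tilde{x}_2\tilde{x}_1$ are not "the content of the preceding Proposition" but further direct verifications of exactly the same kind as your Weyl computation (they do hold).

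The genuine weak point is a circularity in your step (ii) and closing paragraph. To extend $W_*$ from $R=\Bbbk_q[x_1,y_1,x_2,y_2]$ to $D=\Bbbk_q(x_1,y_1,x_2,y_2)$ by the Ore universal property, you must first know that $W_*$ annihilates no nonzero element of $R$; but you justify this by "the explicit inverse formulas show it is bijective on $D$", and your simplicity-of-division-rings argument likewise applies only to a map already defined on $D$ — neither is available before the extension exists. The repair uses only material you already have. Define the inverse substitution $V_*$ on $R$ by your formulas with the tildes erased, $V_*(x_1)=x_1y_2^{-1}$, $V_*(y_1)=(y_1-x_2)y_2^{-1}$, $V_*(x_2)=-x_1(y_1-x_2)^{-1}x_2$, $V_*(y_2)=y_2(y_1-x_2)^{-1}y_1$, and check (one more computation of the same kind) that these images satisfy the six defining relations, so that $V_*\colon R\to D$ is a homomorphism. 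Since $x_1$ is normal in $R$, the powers of $x_1$ form an Ore set, $W_*(R)$ lies in $R[x_1^{-1}]$, and $V_*(x_1)$ is invertible in $D$, so $V_*$ extends to $R[x_1^{-1}]$; the composite $V_*\circ W_*\colon R\to D$ is then a homomorphism equal to the identity on the generators, hence $W_*$ is injective on $R$. Only now does the Ore universal property give the extension $\overline{W}_*\colon D\to D$, automatically injective; its image is a division subring of $D$ which, by your inversion identities, contains $x_1,y_1,x_2,y_2$, hence contains $R$ and therefore equals $D$. This is the non-circular route to surjectivity, and it completes your argument.
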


In the quasiclassical limit $q\to 1$ the $q$-commutation relations 
are replaced by the Poisson algebra structure in the field $\Bbbk(x_1,y_1,x_2,y_2)$ of rational functions of four variables, with the bracket given by
\begin{equation} \label{eq:P-b}
\{ x_i, y_i \} = x_i y_i, \qquad \{ x_i, x_j \} = 0, \qquad \{ y_i, y_j \} = 0,
\qquad \{ x_i, y_j \} =0, \qquad i\neq j .
\end{equation}
In consequence, the normalization map $W$ is also a Poisson automorphism of the field $\Bbbk(x_1,y_1,x_2,y_2)$.

\subsubsection{The bialgebra structure of the quantum plane}
In this Section we use the identification of $\Bbbk_q[x_1,y_1,x_2,y_2]$ with $\Bbbk_q[x,y]\otimes \Bbbk_q[x,y]$, i.e. 
\begin{equation*}
x_1 = x\otimes 1, \qquad y_1 = y\otimes 1, 
\qquad x_2 = 1\otimes x, \qquad y_2 = 1\otimes y.
\end{equation*}
We use also the embedding of $\Bbbk_q[x,y]$ into $\Bbbk_q[x,y]\otimes \Bbbk_q[x,y]$ as the first factor, generated by
\begin{equation*}
x \mapsto x\otimes 1, \qquad y \mapsto y \otimes 1.
\end{equation*}
Then, by Corollary~\ref{cor:W-aut}, the map $W$ allows to define unital homomorphism of algebras $\Delta:\Bbbk_q[x,y] \to \Bbbk_q[x,y]\otimes \Bbbk_q[x,y]$, given on generators by
\begin{equation}
\Delta (x) = 1\otimes x + x \otimes y, \qquad \Delta (y) = y \otimes y.
\end{equation}
Such a $\Delta$ is coassociative
\begin{equation*}
[(\Delta \otimes \mathrm{id} ) \circ \Delta ] \left( \begin{matrix} x \\ y  
\end{matrix} \right) =
\left( \begin{matrix} 1 \otimes 1 \otimes x + 1 \otimes x \otimes y +
x \otimes y \otimes y \\   y \otimes y \otimes y \end{matrix}  \right)
= [( \mathrm{id} \otimes \Delta) \circ \Delta ]\left( \begin{matrix} x \\ y  
\end{matrix} \right) ,
\end{equation*}
which can be checked directly, but actually it is a consequence of Proposition~\ref{prop:W-pent}.
\begin{Rem}
In~\cite{DoliwaSergeev-pentagon} we represented in a special case the (inverse of the present) map $W$ by an inner automorphism $\boldsymbol{W}$ of suitably completed $\Bbbk_q(x_1,y_1,x_2,y_2)$, which up to appropriate scaling satisfies quantum pentagon equation 
\begin{equation*}
\boldsymbol{W}_{23} \boldsymbol{W}_{12} = \boldsymbol{W}_{12} \boldsymbol{W}_{13} \boldsymbol{W}_{23} . 
\end{equation*}
The inner authomorphism operator $\boldsymbol{W}$ can be constructed in terms of the so called non-compact quantum dilogarithm function \cite{Faddeev-LMP,ax+b,Kashaev-P}.
Then the coproduct $\Delta$ is defined by equation \cite{BaajSkandalis}
\begin{equation*}
\Delta(a) = \boldsymbol{W} (a\otimes 1) \boldsymbol{W}^{-1}.
\end{equation*}
\end{Rem}

Given the coproduct $\Delta$ we can find in the standard way (see for example \cite{Sweedler,Kassel}) the other structure maps. In looking for a unital homomorphism $\epsilon:\Bbbk_q[x,y] \to \Bbbk$ (the counit) compatible with the coproduct 
\begin{equation*}
(\epsilon  \otimes \mathrm{id} ) \circ \Delta = \mathrm{id} = ( \mathrm{id} \otimes \epsilon) \circ \Delta,
\end{equation*}
we find after simple calculation
\begin{equation*}
\epsilon(y) = 1, \qquad \epsilon(x) = 0.
\end{equation*}
In this way we completed derivation of a bialgebra structure of the quantum plane.

The above bialgebra can be extended to the Hopf algebra structure, provided we enlarge $\Bbbk_q[x,y]$ to $\Bbbk_q[x,y,y^{-1}]$. 
The antipode antihomomorphism $S:\Bbbk_q[x,y,y^{-1}] \to \Bbbk_q[x,y, y^{-1}]$ can be derived from the compatibility condition
\begin{equation*}
\sum_{(a)} S (a_{(1)} ) a_{(2)} = \sum_{(a)} a_{(1)} S ( a_{(2)}) = \epsilon(a) 1 , \qquad 
\textrm{where} \qquad \Delta(a) = \sum_{(a)} a_{(1)} \otimes a_{(2)},
\end{equation*}
and is given on the generators by
\begin{equation*}
S(y) = y^{-1}, \qquad S(x) = -x y^{-1}.
\end{equation*}
This is the standard Hopf algebra structure on the quantum group of affine transformations of the line \cite{BaajSkandalis}, see also \cite{Sweedler} for the free algebra case.

\subsection{The Veblen flip and its pentagonal property}
\subsubsection{Symmetry of the Desargues configuration and the Veblen flip}
As it was explained in \cite{DoliwaSergeev-pentagon} the second part \eqref{eq:shift-cond} of the Desargues map system describes the Veblen configuration (all the points on Figure~\ref{fig:Desargues-Veblen-normal} except of $\phi$). To study the Veblen configuration it is convenient to exploit its symmetry, and we label its points by two (different) letters out of four. Then the lines are labelled by three letters, and the incidence relation is defined by containment, see Figure~\ref{fig:Veblen-flip-tetr}. 
\begin{figure}
\begin{center}
\includegraphics[width=14cm]{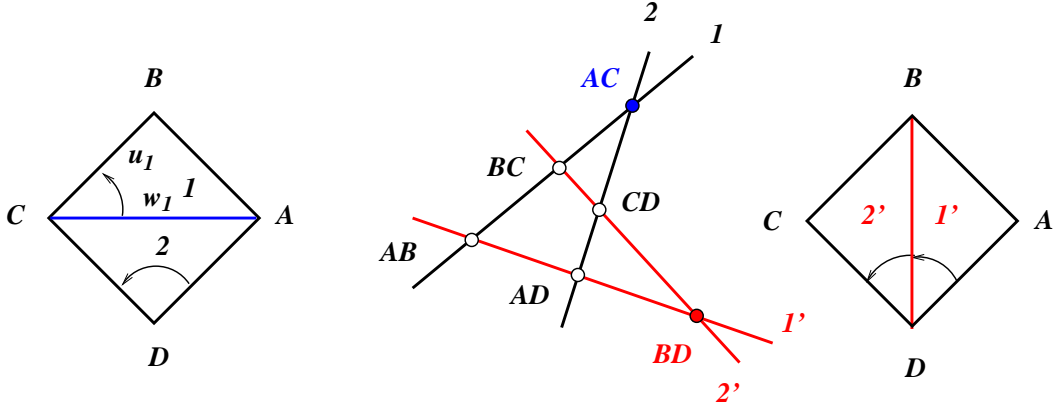}
\end{center}
\caption{The Veblen flip and its simplex representation}
\label{fig:Veblen-flip-tetr}
\end{figure}
Given five labelled points of the configuration, which belong to two initial lines, we can uniquely determine the sixth point as the intersection of two new lines. Algebraically, we start from two linear relations, which is convenient to chose in the form
\begin{align*} 
\bPhi_{AB} = &  \bPhi_{AC}  w_1 +   \bPhi_{BC} u_1 ,\\
\bPhi_{AC} = &  \bPhi_{AD}  w_2 +   \bPhi_{CD} u_2 .
\end{align*}
After the Veblen flip two new lines intersecting in the new point $BD$ give the relations
\begin{align*}
\bPhi_{AB} = &  \bPhi_{AD}  w^\prime_1 + \bPhi_{BD} u^\prime_1 ,\\
\bPhi_{BC} = & \bPhi_{BD}   w^\prime_2 + \bPhi_{CD} u^\prime_2  ,
\end{align*}
which define the map $W^G:\DD^2\times \DD^2 \ni ((u_1,w_1), (u_2, w_2)) \dashrightarrow 
((u^\prime_1,w^\prime_1), (u^\prime_2 , w^\prime_2)) \in \DD^2\times \DD^2 $
\begin{equation} \label{eq:WG}
u^\prime_1 = G w_1, \qquad w^\prime_1= w_2 w_1, \qquad
u^\prime_2 = -u_2 w_1 u_1^{-1}, \qquad w^\prime_2 = G w_1 u_1^{-1},
\end{equation}
where the free parameter $G$ represents allowed scaling in definition of homogeneous coordinates of the new point
\begin{equation*}
\bPhi_{BD} G = \bPhi_{AB} w_1^{-1} - \bPhi_{AD} w_2 = \bPhi_{BC} u_1 w_1^{-1} + \bPhi_{CD} u_2.
\end{equation*}
For our purposes it is convenient to attach the four letters $A,B,C,D$ to vertices of a simplex. Then edges of the simplex label points of the Veblen configuration, while its faces label lines of the configuration, see Figure~\ref{fig:Veblen-flip-tetr}. Notice that in such a representation the Veblen flip map $W^G$ has the same description as the previous map $W$. The only difference is that the variables $u$ and $w$ are attached to edges of the simplex, and the edge opposite to the arrow represents the point of the configuration with homogeneous coordinates on the left hand side of the corresponding linear equation with the coefficient equal to one. Such similarity allows to consider 
For our purposes it is important to have the description of the Veblen flip map in the pentagon property of the Veblen flip map map $W^G$. 
\begin{Prop}
The map $W^G:\DD^2\times \DD^2 \ni ((u_1,w_1), (u_2, w_2)) \dashrightarrow 
((u^\prime_1,w^\prime_1), (u^\prime_2 , w^\prime_2)) \in \DD^2\times \DD^2 $ given by equations \eqref{eq:WG} satisfies the functional pentagon equation
\begin{equation} \label{eq:W-pent-G}
W_{12}^V \circ W_{23}^U = W_{23}^Z \circ W_{13}^Y \circ W_{12}^X, \qquad \text{in} \qquad
\DD^2\times \DD^2 \times \DD^2,
\end{equation}
provided the parameters of the maps satisfy the relation
\begin{equation} \label{eq:UV-XYZ}
V=Yw_2, \qquad ZX =- Uw_2.
\end{equation}
\end{Prop}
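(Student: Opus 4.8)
The plan is to verify \eqref{eq:W-pent-G} by direct computation, exactly as in the proof of Proposition~\ref{prop:W-pent}, the only genuinely new feature being that the gauge parameters $G$ attached to each elementary flip must be carried along and ultimately correlated. I would evaluate both composite maps on a generic triple $((u_1,w_1),(u_2,w_2),(u_3,w_3))\in\DD^2\times\DD^2\times\DD^2$, applying \eqref{eq:WG} one factor-pair at a time and comparing outputs slot by slot.

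For the left-hand side $W^V_{12}\circ W^U_{23}$ I would first apply $W^U_{23}$ to the second and third slots, obtaining $(u_2,w_2,u_3,w_3)\mapsto(Uw_2,\,w_3w_2,\,-u_3w_2u_2^{-1},\,Uw_2u_2^{-1})$, and then apply $W^V_{12}$ to the unchanged first slot together with this new second slot. For the right-hand side $W^Z_{23}\circ W^Y_{13}\circ W^X_{12}$ I would apply $W^X_{12}$, then $W^Y_{13}$ on the first and third slots, then $W^Z_{23}$. The only delicate algebra is the noncommutative inversion $(-u_2w_1u_1^{-1})^{-1}=-u_1w_1^{-1}u_2^{-1}$ and the internal cancellations $X^{-1}X=1$, $u_1^{-1}u_1=1$, $w_1w_1^{-1}=1$ that it produces; these collapse the iterated triple products back into short monomials.

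Comparing the two outputs, I expect every component to agree automatically except where the raw parameters surface. The parameter-free slots match outright (for instance the first-slot $w$-components are both $w_3w_2w_1$, and the third-slot $u$-components are both $-u_3w_2u_2^{-1}$), while the parameter-bearing slots generate only two distinct requirements: the first-slot $u$-components $Vw_1$ versus $Yw_2w_1$ force $V=Yw_2$, and the factor-two and factor-three $u$- and $w$-components each reproduce the single relation $ZX=-Uw_2$ (e.g.\ the third-slot $w$-components are $Uw_2u_2^{-1}$ and $-ZXu_2^{-1}$). Thus \eqref{eq:UV-XYZ} is exactly what is required, and — the point worth checking with care — the several matching conditions are mutually consistent, all reducing to these same two relations. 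That the constraints legitimately involve the state variable $w_2$ is expected: the scaling $G$ of the homogeneous coordinates of a newly constructed point depends on the configuration.

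I anticipate the main obstacle to be purely organisational, namely keeping the five parameters $U,V,X,Y,Z$ and their noncommutative products with the $u_i,w_i$ in the correct order through the threefold composition. A conceptual cross-check is available from the simplex picture of Figure~\ref{fig:Veblen-flip-tetr}: adjoining a fifth vertex and carrying out the two flip sequences encoded in \eqref{eq:W-pent-G} yields the same terminal linear relations, so that \eqref{eq:UV-XYZ} merely records the compatible choice of coordinate scalings $G$ for the intermediate new points. In this sense the result is the gauge-dependent counterpart of the unconditional geometric pentagon already established for the normalization map $W$, the extra parameter $G$ being precisely what turns an identity that held automatically for $W$ into one that closes only under the correlation \eqref{eq:UV-XYZ}.
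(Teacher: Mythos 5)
Your proposal is correct and follows essentially the same route as the paper: a direct slot-by-slot computation of both sides of \eqref{eq:W-pent-G}, and the intermediate outputs you cite (e.g.\ the common components $w_3w_2w_1$ and $-u_3w_2u_2^{-1}$, and the parameter-bearing entries $Vw_1$ vs.\ $Yw_2w_1$ and $Uw_2u_2^{-1}$ vs.\ $-ZXu_2^{-1}$) agree exactly with the paper's displayed calculation. The paper's proof is precisely this verification, so no gap remains beyond carrying out the bookkeeping you describe.
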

\begin{proof}
By direct calculation
\begin{equation*}
W_{12}^V \circ W_{23}^U \left( \begin{matrix}
u_1 , w_1 \\ u_2 ,  w_2 \\ u_3 ,  w_3
\end{matrix} \right) = 
\left( \begin{array}{rl}
V w_1, & w_3 w_2 w_1 \\ 
- U w_2 w_1 u_1^{-1} , & V w_1 u_1^{-1}  \\ 
- u_3 w_2 u_2^{-1}, & U w_2 u_2^{-1}
\end{array} \right) 
\end{equation*}
and 
\begin{equation*}
W_{23}^Z \circ W_{13}^Y \circ W_{12}^X
\left( \begin{matrix}
x_1, y_1 \\ x_2, y_2 \\ x_3, y_3
\end{matrix} \right) =
\left( \begin{array}{rl}
Y w_2 w_1, & w_3 w_2 w_1 \\ 
Z X w_1 u_1^{-1} , & Y w_2 w_1 u_1^{-1}  \\ 
- u_3 w_2 u_2^{-1}, & - ZX u_2^{-1}
\end{array} \right) .
\end{equation*}
\end{proof}
\begin{figure}
\begin{center}
\includegraphics[width=12cm]{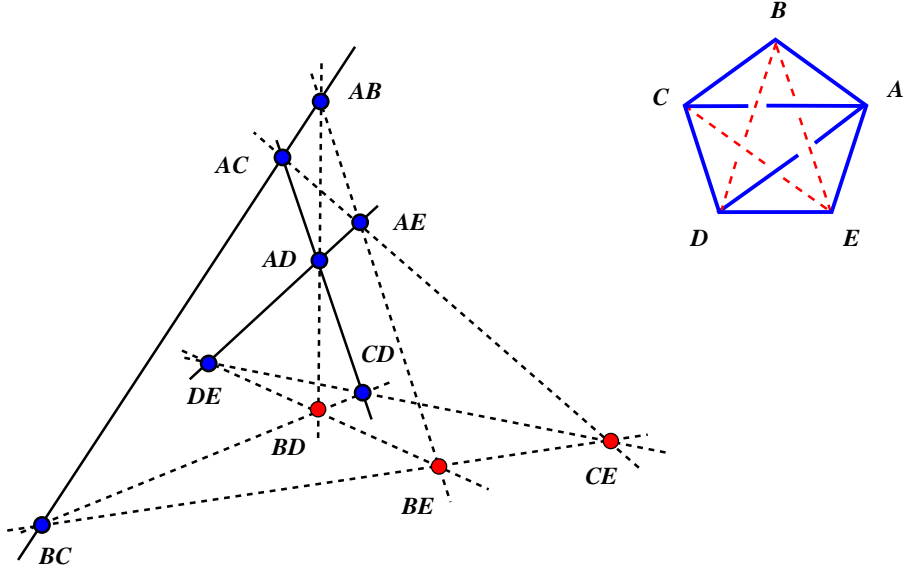}
\end{center}
\caption{Combinatorics of the Desargues configuration; the initial seven points of the configuration used to study the pentagonal property of the Veblen flip map are  distinguished by solid edges of the $4$-simplex, while the initial three lines of the configuration correspond to three distinguished triangular facets of the $4$-simplex.}
\label{fig:Desargues-simplex}
\end{figure}
To understand the geometric origin of the pentagon relation property of the Veblen flip let us start from seven points of the Desargues configuration which belong to three of its lines, see Figure~\ref{fig:Desargues-simplex}. By making a sequence of Veblen flips, which can be combinatorially represented in the same way like that for the normalization map $W$ (on Figure~\ref{fig:pentagon}), we can recover all the other points of the Desargues configuration. Conditions \eqref{eq:UV-XYZ} result from adjustment of homogeneous coordinates of two points $BE$ and $CE$ (the third new point $BD$ is not constructed in the "upper" way).

\subsubsection{Quantum reduction of the Veblen flip map}
Like in the case of the normalization map $W$ we will be interested in the ultra-local reduction of the Veblen flip map. In that case we also "fix" the gauge parameter $G$ of the map in the sense that it depends on its arguments as follows
\begin{equation} \label{eq:G-ab}
G(u_1,w_1,u_2,w_2) = (\alpha u_2 + \beta w_2 u_1) w_1^{-1}, \qquad \alpha, \beta \in \Bbbk,
\end{equation}
which effectively gives dependence of the map on two scalar (we assume both do not vanish simultaneously) parameters.  Then the map, which we denote from now on by $W^{(\alpha,\beta)}$ reads
\begin{equation} \label{eq:W-ab}
u^\prime_1 = \alpha u_2 + \beta u_1 w_2, \qquad w^\prime_1= w_1 w_2, \qquad
u^\prime_2 = -w_1 u_1^{-1} u_2 , \qquad w^\prime_2 = \alpha u_1^{-1} u_2 + \beta u_1 ,
\end{equation} 
and in the case of $\alpha=\beta=1$ reduces to formulas \eqref{eq:W}.

Again, the map $W^{(\alpha,\beta)}$ supplemented with the ultra-locality requirement and generic position assumptions selects the Weyl commutation relation, and in particular we have the following result (obtained in \cite{DoliwaSergeev-pentagon} for inverse of the map) which can be verified directly.
\begin{Cor} \label{cor:W-aut-ab}
The map 
$W^{(\alpha,\beta)}$ provides automorphism of  the division algebra $\Bbbk_q(u_1,w_1,u_2,w_2)$, and in consequence also gives Poisson automorphism of the field $\Bbbk(u_1,w_1,u_2,w_2)$.
\end{Cor}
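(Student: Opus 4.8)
The plan is to mimic precisely the structure of the proof of the analogous statement for the normalization map $W$ (Corollary~\ref{cor:W-aut}), since the Veblen flip map $W^{(\alpha,\beta)}$ has, by construction, the same formal description as $W$. First I would fix the $q$-commutation relations $u_i w_i = q w_i u_i$ for $i=1,2$, together with the ultra-locality relations \eqref{eq:UL-xy} (with $x,y$ replaced by $u,w$), which make $\Bbbk_q(u_1,w_1,u_2,w_2)$ well defined as the division algebra of fractions of $\Bbbk_q[u,w]^{\otimes 2}$. The goal is to show that the four output expressions $u_1^\prime, w_1^\prime, u_2^\prime, w_2^\prime$ given by \eqref{eq:W-ab} again satisfy the very same Weyl relations among themselves, so that $W^{(\alpha,\beta)}$ sends generators to generators and hence extends to an algebra automorphism.

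The key steps, in order, are as follows. I would first verify the three ultra-locality conditions on the outputs: that $u_1^\prime$ commutes with $u_2^\prime$, that $w_1^\prime$ commutes with $w_2^\prime$, and the two mixed relations $u_1^\prime w_2^\prime = w_2^\prime u_1^\prime$ and $w_1^\prime u_2^\prime = u_2^\prime w_1^\prime$. Each of these is a direct computation using \eqref{eq:W-ab}, the relations $u_i w_i = q w_i u_i$, and the fact that $\alpha,\beta\in\Bbbk$ lie in the center. For instance $w_1^\prime = w_1 w_2$ and $w_2^\prime = \alpha u_1^{-1}u_2 + \beta u_1$ involve variables from different commuting sectors except through the single overlap in the first sector, and one checks the $q$-factors cancel. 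Next I would verify the two remaining relations $u_1^\prime w_1^\prime = q\, w_1^\prime u_1^\prime$ and $u_2^\prime w_2^\prime = q\, w_2^\prime u_2^\prime$ with the \emph{same} $q$; again this reduces to pushing $q$'s past each other using $u_i w_i = q w_i u_i$ and the mutual commutativity of the two sectors. Once all six relations are confirmed, the map $W^{(\alpha,\beta)}$ restricted to generators is a homomorphism $\Bbbk_q[u,w]^{\otimes 2}\to\Bbbk_q(u_1,w_1,u_2,w_2)$; since it is invertible (the formulas \eqref{eq:W-ab} can be inverted rationally, as $W^{(\alpha,\beta)}$ is a birational map), it extends to an automorphism of the division algebra of fractions.

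Finally, the Poisson statement follows in the quasiclassical limit $q\to 1$ by the same mechanism already invoked after Corollary~\ref{cor:W-aut}: the $q$-commutator $u_i w_i - q w_i u_i$ degenerates, at first order in $q-1$, into the Poisson bracket $\{u_i,w_i\}=u_i w_i$ with the remaining brackets vanishing as in \eqref{eq:P-b}, and the automorphism property descends to a Poisson automorphism of $\Bbbk(u_1,w_1,u_2,w_2)$.

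I expect the computations to be entirely routine, so in the sense of genuine difficulty there is no serious obstacle; the only point requiring mild care is bookkeeping of the noncommutative factors in the two relations involving the inverse $u_1^{-1}$, namely in $u_2^\prime = -w_1 u_1^{-1} u_2$ and $w_2^\prime = \alpha u_1^{-1} u_2 + \beta u_1$, where one must track how $u_1^{-1}$ commutes past $w_1$ (picking up $q^{-1}$) and confirm that the asserted common value of $q$ is consistent across all relations. This consistency is exactly what the general position assumptions from the preceding Proposition guarantee, so once that single $q$ is pinned down the remainder is mechanical.
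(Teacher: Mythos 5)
Your overall strategy coincides with the paper's: the paper gives no more for Corollary~\ref{cor:W-aut-ab} than "can be verified directly" (citing \cite{DoliwaSergeev-pentagon}, where the statement was obtained for the inverse map), and that direct verification --- preservation of ultra-locality and of the Weyl relations with one common $q$, then birational invertibility to pass to the division algebra of fractions, then the quasiclassical limit for the Poisson statement --- is exactly what you outline. However, one step of your plan would fail as written: you take the formula for $w_2^\prime$ verbatim from \eqref{eq:W-ab}, namely $w_2^\prime = \alpha u_1^{-1}u_2 + \beta u_1$, and this printed formula contains a typo. With it the outputs are not even ultra-local: computing $u_1^\prime w_2^\prime$ and $w_2^\prime u_1^\prime$ under $u_iw_i = q\,w_iu_i$ and the analogue of \eqref{eq:UL-xy} leaves a mismatch between the cross terms $\alpha\beta\, w_2u_2$ and $\alpha\beta q\, w_2u_2$, and the Weyl check $u_2^\prime w_2^\prime = q\, w_2^\prime u_2^\prime$ forces $q^2=1$. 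The correct formula is
\begin{equation*}
w_2^\prime = \alpha u_1^{-1}u_2 + \beta w_2 ,
\end{equation*}
as one sees either by substituting the gauge \eqref{eq:G-ab} into \eqref{eq:WG}, which gives $w_2^\prime = G w_1 u_1^{-1} = \alpha u_2 u_1^{-1} + \beta w_2$, or from the paper's own remark that $\alpha=\beta=1$ must reproduce \eqref{eq:W}, whose last entry is $\tilde{y}_2 = y_2 + x_1^{-1}x_2$ rather than $x_1^{-1}x_2 + x_1$. With this correction all six relations (two Weyl, four ultra-locality) do close with a single common $q$, and the remainder of your argument --- extension to an automorphism of $\Bbbk_q(u_1,w_1,u_2,w_2)$ via the rational inverse, and the Poisson automorphism of $\Bbbk(u_1,w_1,u_2,w_2)$ with bracket \eqref{eq:P-b} as $q\to 1$ --- is sound and matches the paper's intent.

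A second, smaller point: the general position assumptions play no role in the direction you are proving, contrary to your closing remark. They are needed only for the converse statement (that preservation of ultra-locality \emph{forces} the Weyl commutation relations); once $u_iw_i = q\,w_iu_i$ is assumed, the corollary is an unconditional identity check, and the consistency of the single $q$ across all output relations is delivered by the computation itself, not guaranteed by genericity hypotheses.
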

The pentagonal property of the Veblen flip map $W^G$ transfers on the level of its ultra-local version $W^{(\alpha,\beta)}$ as follows.
\begin{Prop}
For $q$-commuting ultra-local arguments $u_i, w_i$, $i=1,2,3$, the map $W^{(\alpha,\beta)}$ given by \eqref{eq:W-ab} satisfies the functional pentagon equation
\begin{equation} \label{eq:W-pent-G-ab}
W_{12}^{(\alpha_V,\beta_V)}\circ W_{23}^{(\alpha_U,\beta_U)} = 
W_{23}^{(\alpha_Z,\beta_Z)} \circ W_{13}^{(\alpha_Y,\beta_Y)} \circ 
W_{12}^{(\alpha_X,\beta_X)}, 
\end{equation}
provided the (spectral) parameters of the maps satisfy the relation
\begin{equation} \label{eq:UV-XYZ-ab}
\alpha_X = \alpha_V \beta_Z, \qquad \alpha_Y = \alpha_U \alpha_V, \qquad 
\alpha_Z = \alpha_U \beta_X, \qquad
\beta_U = \beta_Y \beta_Z, \qquad \beta_V = \beta_X \beta_Y .
\end{equation} 
\end{Prop}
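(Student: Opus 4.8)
The plan is to verify \eqref{eq:W-pent-G-ab} by a direct computation, exactly parallel to the proof of the pentagon property of $W^G$ given above. I would apply both composite maps to a generic triple $((u_1,w_1),(u_2,w_2),(u_3,w_3))\in(\DD^2)^3$, using the explicit formulas \eqref{eq:W-ab} stage by stage and simplifying each intermediate expression with the ultra-locality relations \eqref{eq:UL-xy} together with the Weyl commutation relations \eqref{eq:WCR-xy}. The left-hand side requires two applications of $W^{(\alpha,\beta)}$ and the right-hand side three, so the bulk of the work is the careful evaluation of the arguments seen by each factor at the stage where it acts.

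I would then record the two outputs componentwise. The decisive structural point is that, although the entries are noncommutative rational expressions, each one is a $\Bbbk$-linear combination of \emph{normal-ordered} monomials in $u_1,w_1,\dots,u_3,w_3$; under the general-position assumptions of the normalization-map discussion these monomials are linearly independent in $\Bbbk_q(u_1,w_1,u_2,w_2,u_3,w_3)$, so equality of the two sides reduces to equating scalar coefficients. Matching the first tensor factor yields $\alpha_Y=\alpha_U\alpha_V$ and $\beta_V=\beta_X\beta_Y$ together with an auxiliary relation $\alpha_V\beta_U=\alpha_X\beta_Y$; matching the first entry of the second factor yields $\alpha_Z=\alpha_U\beta_X$ and $\beta_U=\beta_Y\beta_Z$ together with $\alpha_Y\beta_Z=\alpha_U\alpha_X$. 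Feeding $\beta_U=\beta_Y\beta_Z$ into the first auxiliary relation and $\alpha_Y=\alpha_U\alpha_V$ into the second collapses both of them to $\alpha_X=\alpha_V\beta_Z$, which completes the system \eqref{eq:UV-XYZ-ab}; the remaining components then match identically and impose no further conditions.

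The step I expect to be the main obstacle is the noncommutative bookkeeping in the three-step right-hand side, in particular the inverses of the ``sum'' quantities such as $Q:=\alpha_X u_2+\beta_X u_1 w_2$ (the intermediate $u$-component produced by the first flip). The computation only closes because of two simplifying identities that I would isolate at the outset: $Q$ commutes with $u_1$, and $Q=u_1 w_2^{(1)}$, where $w_2^{(1)}=\alpha_X u_1^{-1}u_2+\beta_X w_2$ is the intermediate $w$-component; these let the factors $Q^{-1}$ telescope against the neighbouring $Q$'s. It is precisely at this point that the powers of $q$ generated by \eqref{eq:WCR-xy} (for instance through $w_2u_2^{-1}=qu_2^{-1}w_2$) must, and do, cancel, so that the surviving constraints are the $q$-independent scalar relations \eqref{eq:UV-XYZ-ab}.

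Alternatively, and more conceptually, one may regard $W^{(\alpha,\beta)}$ as the gauge-fixed $W^G$ obtained by specialising the free scaling $G$ to the form \eqref{eq:G-ab}. Then the pentagon \eqref{eq:W-pent-G-ab} is inherited from the already-established pentagon \eqref{eq:W-pent-G} for $W^G$, and one only needs to check that the abstract gauge conditions \eqref{eq:UV-XYZ}, namely $V=Yw_2$ and $ZX=-Uw_2$, turn into \eqref{eq:UV-XYZ-ab} once each gauge is replaced by its value \eqref{eq:G-ab} at the arguments that the corresponding factor sees. This route organises the same computation around the two conditions \eqref{eq:UV-XYZ} rather than around the six output entries, but it meets the identical obstacle of telescoping the $Q^{-1}$-type factors.
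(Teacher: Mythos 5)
Your proposal is correct, and its closing ``alternative'' paragraph is in fact exactly the paper's own proof: the paper disposes of this proposition in one sentence, observing that it suffices to check that the conditions \eqref{eq:UV-XYZ} of the already-established pentagon relation \eqref{eq:W-pent-G} for $W^G$, with each gauge specialised to \eqref{eq:G-ab} evaluated at the arguments the corresponding factor sees, reduce under ultra-locality and the Weyl relations to the scalar system \eqref{eq:UV-XYZ-ab}. Your primary route --- expanding all six output entries of both composite maps and matching coefficients of linearly independent normal-ordered monomials --- is sound but does redundant work, since the $W^G$ proposition already certifies that the six-entry comparison is equivalent to the two gauge conditions $V=Yw_2$ and $ZX=-Uw_2$; organising the computation around those two conditions is what the paper's one-line proof buys, while your direct route has the merit of being self-contained and of re-verifying the $W^G$ computation in passing. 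Your bookkeeping matches what the reduction actually produces: $V=Yw_2$ yields $\alpha_Y=\alpha_U\alpha_V$, $\beta_V=\beta_X\beta_Y$ and the auxiliary relation $\alpha_V\beta_U=\alpha_X\beta_Y$, while $ZX=-Uw_2$ --- after the telescoping of the $Q^{-1}$-factors via precisely your two identities $[Q,u_1]=0$ and $Q=u_1\left(\alpha_X u_1^{-1}u_2+\beta_X w_2\right)$, with the generated powers of $q$ cancelling --- yields $\alpha_Z=\alpha_U\beta_X$, $\beta_U=\beta_Y\beta_Z$ and the auxiliary $\alpha_Y\beta_Z=\alpha_U\alpha_X$; the two auxiliaries collapse to $\alpha_X=\alpha_V\beta_Z$ exactly as you describe (and conversely \eqref{eq:UV-XYZ-ab} implies both auxiliaries identically, so the sufficiency asserted by the proposition needs no genericity assumption). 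One further point in your favour: you tacitly worked with $w^\prime_2=\alpha u_1^{-1}u_2+\beta w_2$ as the last entry of the map, which is the correct form --- the ``$\beta u_1$'' appearing in \eqref{eq:W-ab} is a typo, as the claimed reduction to \eqref{eq:W} at $\alpha=\beta=1$ confirms.
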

\begin{proof}
It is enough to check that the condition \eqref{eq:UV-XYZ} in the case of the gauge parameter $G$ given by \eqref{eq:G-ab} and the ultra-locality requirement reduces to the condition \eqref{eq:UV-XYZ-ab}.
\end{proof}
\begin{Rem}
The corresponding solution of the quantum pentagon equation was constructed in~\cite{DoliwaSergeev-pentagon}.
\end{Rem}

\section*{Acknowledgments}
I would like to thank Paolo M. Santini for the long standing joint work on integrable aspects of multidimensional quadrilateral lattices and their reductions, and Sergey M. Sergeev for  fruitful collaboration on the paper \cite{DoliwaSergeev-pentagon} and discussions.
The research was supported in part by the Ministry of Science and Higher Education grant No. N~N202~174739.

\bibliographystyle{amsplain}

\providecommand{\bysame}{\leavevmode\hbox to3em{\hrulefill}\thinspace}

\end{document}